\DeclarePairedDelimiterX{\Iintv}[1]{\llbracket}{\rrbracket}{\iintvargs{#1}}
\NewDocumentCommand{\iintvargs}{>{\SplitArgument{1}{,}}m}
{\iintvargsaux#1} %
\NewDocumentCommand{\iintvargsaux}{mm} {#1\mkern1.5mu..\mkern1.5mu#2}
\newtheorem{remark}{Remark}
\newtheorem{theorem}{Theorem}
\newtheorem{lemma}{Lemma}
\newtheorem{assumption}{Assumption}
\newtheorem{definition}{Definition}
\def\BibTeX{{\rm B\kern-.05em{\sc i\kern-.025em b}\kern-.08em
    T\kern-.1667em\lower.7ex\hbox{E}\kern-.125emX}}
\begin{document}
\title{Affine Frequency Division Multiplexing For Communications on Sparse Time-Varying Channels}

\author{
Wissal Benzine\textit{$^1$}, Ali Bemani\textit{$^1$}, Nassar Ksairi\textit{$^1$}, and Dirk Slock\textit{$^2$} \\
\textit{$^1$}Mathematical and Algorithmic Sciences Lab, Huawei France R\&D, Paris, France \\ \textit{$^2$}Communication Systems Department, EURECOM, Sophia Antipolis, France\\
Emails:
\{wissal.benzine1, ali.bemani, nassar.ksairi\}@huawei.com, Dirk.Slock@eurecom.fr
}
\maketitle
\begin{abstract}
 This paper addresses channel estimation for linear time-varying (LTV) wireless propagation links under the assumption of \emph{double sparsity} i.e., sparsity in both the delay and the Doppler domains. Affine frequency division multiplexing (AFDM), a recently proposed waveform, is shown to be optimal (in terms of pilot overhead) for this problem. With both mathematical analysis and numerical results, the minimal pilot and guard overhead needed for achieving a target mean squared error (MSE) while performing channel estimation is shown to be the smallest when AFDM is employed instead of both conventional and recently proposed waveforms.
\end{abstract}

\begin{IEEEkeywords}
Time-varying channels, delay-Doppler domain, sparsity, mobility, channel estimation, AFDM
\end{IEEEkeywords}

\section{Introduction}


Sparsity is an important feature of wireless propagation channels that can be exploited to improve channel estimation performance and/or pilot overhead. In wireless communications systems operating in sub-6GHz frequency bands, sparsity is mostly associated with the delay domain and manifests itself, as in \cite{Non_gaussian_MP,mmWave}, with a number of significant channel delay taps that is much smaller than the maximum delay spread.
In high-mobility scenarios, such as communications to/from high-speed trains (HST) or fast-moving cars, sparsity extends also to the Doppler domain. In high-frequency bands, Doppler domain sparsity appears even at moderate transmitter-receiver relative velocity values.

Delay-Doppler sparsity is assumed in \cite{mdpss}, and this sparsity is leveraged by performing channel estimation using compressed sensing methods. However, no study of the impact of sparsity on pilot overhead or on channel estimation performance is provided. In \cite{ofdm_dD_sparsity}, a special case of delay-Doppler sparsity is adopted in which one Doppler shift per delay tap is assumed. This model, which is restrictive for real-world wireless propagation channels, is then used to derive a lower bound on the number of pilots needed for guaranteed sparse recovery.
In \cite{otfs_ofdm_sparsity}, a  similarly restrictive delay-Doppler sparsity model with one Doppler frequency shift per delay tap was assumed when comparing orthogonal frequency division multiplexing (OFDM) and orthogonal time frequency space (OTFS) in terms of the \emph{pragmatic capacity} i.e., the
mutual information of the virtual channel having at its
input the constellation symbols \emph{excluding the pilot and guard symbols} and at its output the detector soft-outputs. 
While this overhead-aware comparison constitutes a step forward, the restrictive sparsity model does not allow to do the comparison under realistic propagation conditions nor to devise pilot patterns with adjustable time and frequency densities for different delay-Doppler sparsity levels. Nonetheless, such works point towards the fact that some waveforms are more suited to take advantage of delay-Doppler sparsity than others. For instance, channel estimation overhead in OTFS cannot be significantly reduced when the channel exhibits more sparsity unless non-orthogonal pilot-data multiplexing is used as in \cite{SP_otfs}. For such a scheme, sparsity in the channel delay-Doppler response lessens inter-pilot and pilot-data interference. However, the use of iterative detection methods becomes necessary which not only makes this approach require high computational complexity but also makes it prone to error propagation.
We thus restrict our work to the case of orthogonal resources for pilot and data symbols. In that context, we compare channel estimation performance and pilot overhead requirements of affine frequency division multiplexing (AFDM) \cite{BemaniAFDM_TWC, bemani2021afdm_ICC}, a newly proposed waveform based on the discrete affine Fourier transform (DAFT), to those of OFDM, OTFS and single-carrier modulation (SCM).
\subsection*{Contributions}
i) A statistical definition of delay-Doppler sparsity is provided. It is sufficiently general to cover at least three relevant types of delay-Doppler sparsity profiles. ii) This definition is used to get the statistical properties of the minimal pilot overhead needed for channel estimation. iii) Using these properties, closed-form asymptotic results for the average minimal pilot overhead of different waveforms are derived in the limit of a large communications frame size showing AFDM superiority.
\subsection*{Notations}
$\mathrm{Bernoulli}(p)$ is the Bernoulli distribution with probability $p$ and $\mathrm{B}(n,p)$ is the binomial distribution with parameters $(n,p)$.
If $\mathcal{A}$ is a set, $|\mathcal{A}|$ stands for its cardinality. For any real number $x$, $(x)_+$ stands for $\mathrm{max}(x,0)$. The set of all integers between $l$ and $m$ (including $l$ and $m$, $(l,m)\in\mathbb{Z}^2$) is denoted $\Iintv{l,m}$. For a matrix $\mathbf{M}$, $[\mathbf{M}]_{c}$ stands for the $c$-th column. The ceiling operation is denoted as $\lceil.\rceil$ and the floor operation as $\lfloor.\rfloor$. The modulo $N$ operation is denoted as $(\cdot)_N$.  
\section{Background: AFDM}
In AFDM, modulation is achieved through the use of DAFT. DAFT is a discretized version \cite{erseghe2005multicarrier} of the affine Fourier transform (AFT) \cite{healy2015linear,BemaniAFDM_TWC} with chirp $e^{-\imath2\pi (c_2k^2+{\frac{1}{N} }kn+c_1n^2)}$ as its kernel, where $c_1$ and $c_2$ are parameters that we adjust depending on the delay-Doppler characteristics of the channel.
Consider a set of quadrature amplitude modulation (QAM) symbols denoted $\{x_k\}_{k=0\cdots N-1}$. AFDM employs inverse DAFT (IDAFT) to map $\{x_k\}_{k=0\cdots N-1}$ to $\{s_n\}_{n=0\cdots N-1}$ as follows:
\begin{equation}
\label{eq:AFDM_mod}
    s_n = \frac{1}{\sqrt{N}}\sum_{k = 0}^{N-1}x_ke^{\imath2\pi (c_2k^2+{\frac{1}{N}}kn+c_1n^2)}, n= 0\cdots N-1
\end{equation}
with the following so called {\emph{chirp-periodic prefix}} (CPP)
\begin{equation}
    s_n = s_{N+n}e^{-\imath2\pi c_1(N^2+2Nn)},\quad n = -L_{\mathrm{CPP}}\cdots -1
\end{equation}
where $L_{\mathrm{CPP}}$ denotes an integer that is greater than or equal to the number of samples required to represent the maximum delay of the wireless channel. The CPP simplifies to a cyclic prefix (CP) whenever $2c_1N$ is integer and $N$ is even, an assumption that will be considered to hold from now on.

\section{System model}
 \subsection{Doubly sparse linear time-varying (DS-LTV) channels}
 Consider the following model of the variation with respect to the time index $n$ of the complex gain $h_{l,n}$ of the $l$-th path of a LTV channel with $L$ paths 
 \begin{equation}
     \label{eq:ch_model}
     h_{l,n}=\sum_{q=-Q}^{Q}\alpha_{l,q}e^{\imath 2\pi\frac{nq}{N}}, l=0\cdots L-1, n=0\cdots N-1
 \end{equation}
Note that this model is an on-grid approximation of a time-varying channel: the Doppler shifts are integer valued when normalized with the resolution associated with the transmission duration.
This model is just a first approximation used to make the presentation of the results and the mathematical proofs easier to follow. The case with off-grid Doppler frequency shifts will make the subject of a future work. The received samples after transmission over the channel are
\begin{equation}
    r_n = \textstyle\sum_{l = 0}^{L-1}s_{n-l}h_{l,n}\label{r_n} + w_n,\quad n= 0\cdots N-1.
\end{equation}
where $w_n\sim\mathcal{CN}\left(0,\sigma_w^2\right)$ represents the i.i.d. Gaussian noise process. After discarding the CPP and assuming that $L-1\leq L_{\rm CPP}$, the DAFT domain output symbols are
\begin{align}
    y_k &= \frac{1}{\sqrt{N}}\sum_{n = 0}^{N-1}r_ne^{-\imath2\pi (c_2k^2+{\frac{1}{N}}kn+c_1n^2)}, k= 0\cdots N-1\nonumber\\
    &=\sum_{l = 0}^{L-1}\sum_{q=-Q}^{Q}\alpha_{l,q} e^{\imath\frac{2\pi}{N}(Nc_1l^2-nl + Nc_2(n^2 - k^2))}x_n +\Tilde{{w}}_k,
    \label{eq:y_output_integer}
\end{align}
where the second equality is obtained using the input-output relation given in \cite{BemaniAFDM_TWC}, $\Tilde{{w}}_k$ is i.i.d. and $\sim\mathcal{CN}\left(0,\sigma_w^2\right)$
and where $ n = (k - q + 2Nc_1 l)_N$. Note how the Doppler components of delay taps are mixed in the DAFT domain in such a way that a path occupying the $(l,q)$ grid point in the delay-Doppler domain appears as a $q-2Nc_1l$ shift in the DAFT domain. We assume that the complex gains $\alpha_{l,q}$ follow a Bernoulli-Gaussian distribution \cite{sparse_globecom,Bernoulli_Gaussian} with respect to the hidden binary random variables $\left\{I_{l,q}\right\}_{l=0\cdots L-1,q=-Q\cdots Q}$ ($I_{l,q}$ takes the value one in the event that $\alpha_{l,q}\neq 0$ and the value zero otherwise) which 
are assumed to adhere to the following assumption. First define the events $\mathcal{I}_{l,q}\triangleq\{I_{l,q}=1\}$ and $\overline{\mathcal{I}}_{l,q}\triangleq\{I_{l,q}=0\}$ for $(l,q)\in\Iintv{0,L-1}\times\Iintv{-Q,Q}$.
\begin{assumption}
\label{assum:independence}
For any $(l_1,q_1),\ldots,(l_T,q_T)\in\Iintv{0,L-1}\times\Iintv{-Q,Q}$ ($T\leq\min(L,2Q+1)$) such that (s.t.) $l_s\neq l_t$ and $q_s\neq q_t$ for any distinct $s,t\in\Iintv{1,T}$, random variables $\{I_{l_t,q_t}\}_{t=1\cdots T}$ are mutually independent. Also, $\mathbb{P}\left[\mathcal{I}_{l,q}\right]=p_{\rm d}p_{\rm D}$ and $\mathbb{P}\left[\overline{\mathcal{I}}_{l,q}\right]=1-p_{\rm d}p_{\rm D}$ for all $(l,q)\in\Iintv{0,L-1}\times\Iintv{-Q,Q}$.
\end{assumption}
Define $\sigma_{\alpha}^2\triangleq\mathbb{E}\left[\left|\alpha_{l,q}\right|^2|I_{l,q}=1\right]$. The case with delay dependent conditional variances will be addressed in the future. Channel power normalization $\sum_{l=0}^{L-1}\sum_{q=-Q}^{Q}\mathbb{E}\left[\left|\alpha_{l,q}\right|^2\right]=1$ is obtained under Assumption \ref{assum:independence} by $\sigma_{\alpha}^2=\frac{1}{p_{\rm d} L p_{\rm D} (2Q+1)}$. In order to have more insight into the assumption, we give the following definitions of three types of delay-Doppler sparsity satisfying the assumption. The proof that these types satisfy indeed Assumption \ref{assum:independence} is given by Lemma \ref{lem:independence}.

\begin{definition}
\label{Def:dD_sparsity}
 [Figure \ref{fig:examples}-(a)] We say that a DS-LTV channel has a {\bf type-1 delay-Doppler sparsity} if there exist $0<p_{\rm d},p_{\rm D}<1$ and $L+2Q+1$ mutually independent random variables $\{\{I_l\}_{l=0\cdots L-1}, \{I_q\}_{q=-Q\cdots Q}\}$ such that $I_{l,q}=I_lI_q$, $I_l\sim\mathrm{Bernoulli}(p_{\rm d})$ and $I_q\sim\mathrm{Bernoulli}(p_{\rm D})$ for any $(l,q)$.
 \end{definition}
Note that according to Definition \ref{Def:dD_sparsity}, $\mathbb{E}[\sum_lI_l]=p_{\rm d}L$ is the mean number of active delay taps of the channel and can be thought of as the {\it delay domain sparsity level} while $\mathbb{E}[\sum_qI_q]=p_{\rm D}(2Q+1)$ is the mean number of active Doppler bins and is thus the {\it Doppler domain sparsity level}.
\begin{definition}
    \label{Def:dDl_sparsity}
 [Figure \ref{fig:examples}-(b)] We say that a DS-LTV channel has a {\bf type-2 delay-Doppler sparsity} if there exist $0<p_{\rm d},p_{\rm D}<1$ and $L(2Q+2)$  mutually independent random variables $\{I_l,\{I_q^{(l)}\}_{q=-Q\cdots Q}\}_{l=0\cdots L-1}$ such that $I_{l,q}=I_lI_q^{(l)}$, $I_l\sim\mathrm{Bernoulli}(p_{\rm d})$ and $I_q^{(l)}\sim\mathrm{Bernoulli}(p_{\rm D})$ for any $(l,q)$.
\end{definition}
Delay domain sparsity level under type-2 delay-Doppler sparsity is still equal to $p_{\rm d}L$ as under type-1 sparsity. However, $\mathbb{E}[\sum_qI_q^{(l)}]=p_{\rm D}(2Q+1)$ is now the Doppler  sparsity level {\it per delay bin} and not the ``total'' Doppler domain sparsity level. 
\begin{definition}
    \label{Def:dD_cluster_sparsity}
 [Figure \ref{fig:examples}-(c)] We say that a DS-LTV channel has a {\bf type-3 delay-Doppler sparsity} if there exist $0<p_{\rm d},p_{\rm D}<1$ and $L(2Q+2)$ random variables $\{I_l,\{I_q^{(l)}\}_{q=-Q\cdots Q}\}_{l=0\cdots L-1}$ such that $I_{l,q}=I_lI_q^{(l)}$ and any $I_{l_1}$ (resp. $I_q^{(l_1)}$) is independent from any $I_{l_2}$ (resp. $I_q^{(l_2)}$) for any $l_1\neq l_2$. Also, there exist an integer $1\leq R<Q$ such that $\frac{R}{2Q-R}=p_{\rm D}$ and $L$ mutually independent random variables $\left\{\Xi^{(l)}\right\}_{l=0\cdots L-1}$ each uniformly distributed on $\Iintv{-Q+\lfloor\frac{R-1}{2}\rfloor,Q-\lfloor\frac{R}{2}\rfloor}$ such that for any $l\in\Iintv{0,L-1}$, $\xi\in\Iintv{-Q+\lfloor\frac{R-1}{2}\rfloor,Q-\lfloor\frac{R}{2}\rfloor}$
 \begin{equation}
     \mathbb{P}\left[I_q^{(l)}=1|\Xi^{(l)}=\xi\right]=\left\{
     \begin{array}{cc}
          1,&  -\frac{R}{2}<q-\xi\leq \frac{R}{2}\\
          0,& \mathrm{otherwise}
     \end{array}
     \right.\:.
 \end{equation}
\end{definition}
Note that the above definition associates with each active delay tap a cluster of Doppler bins of cardinality $R$ and that random variable $\Xi^{(l)}$ represents the (random) value of the center frequency of the Doppler cluster of delay tap $l$.
\begin{figure}
  \centering
  \begin{tabular}{ c @{\hspace{5pt}} c @{\hspace{5pt}} c}
  \includegraphics[width=.3\columnwidth] {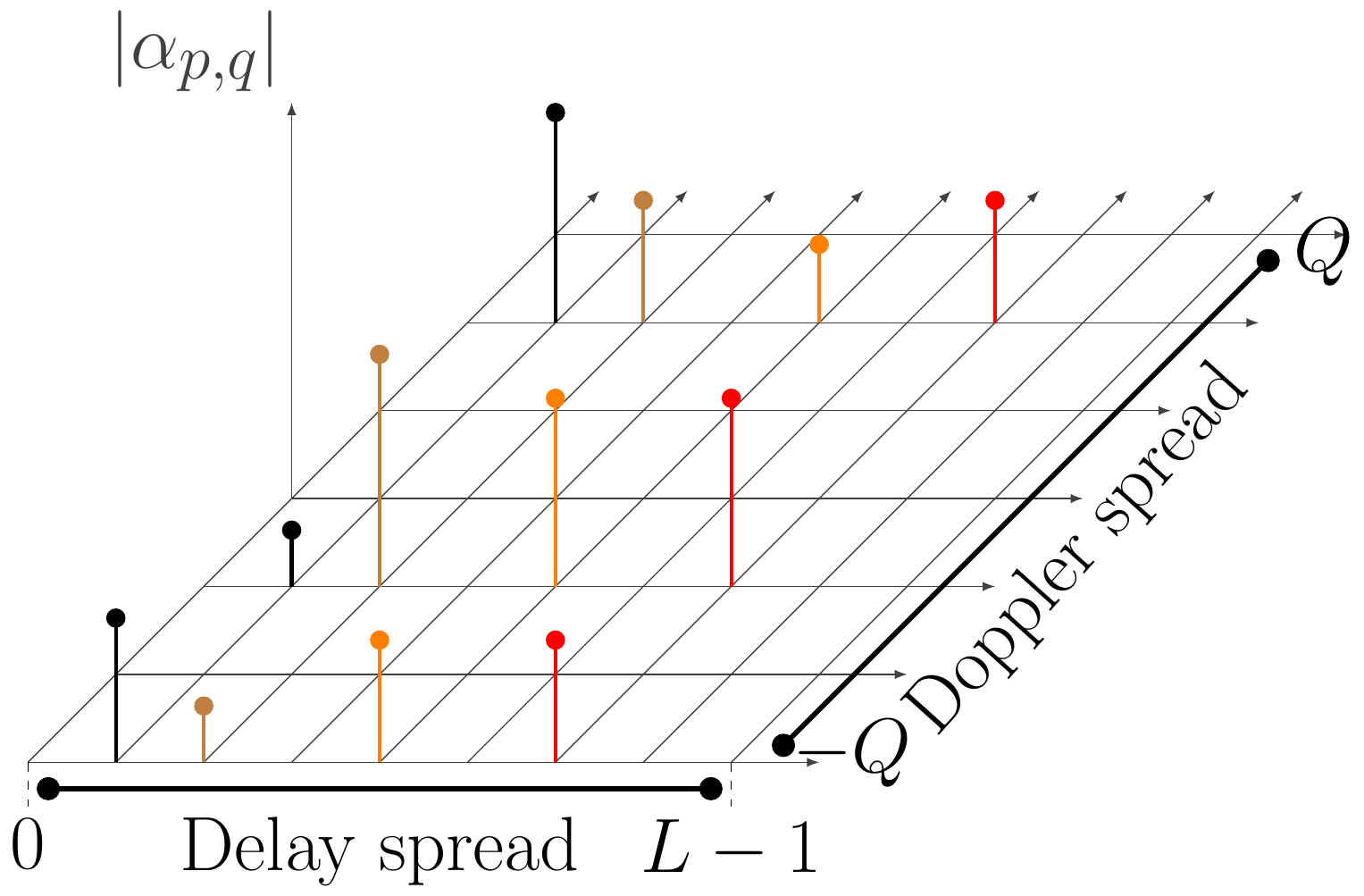} &
    \includegraphics[width=.3\columnwidth]{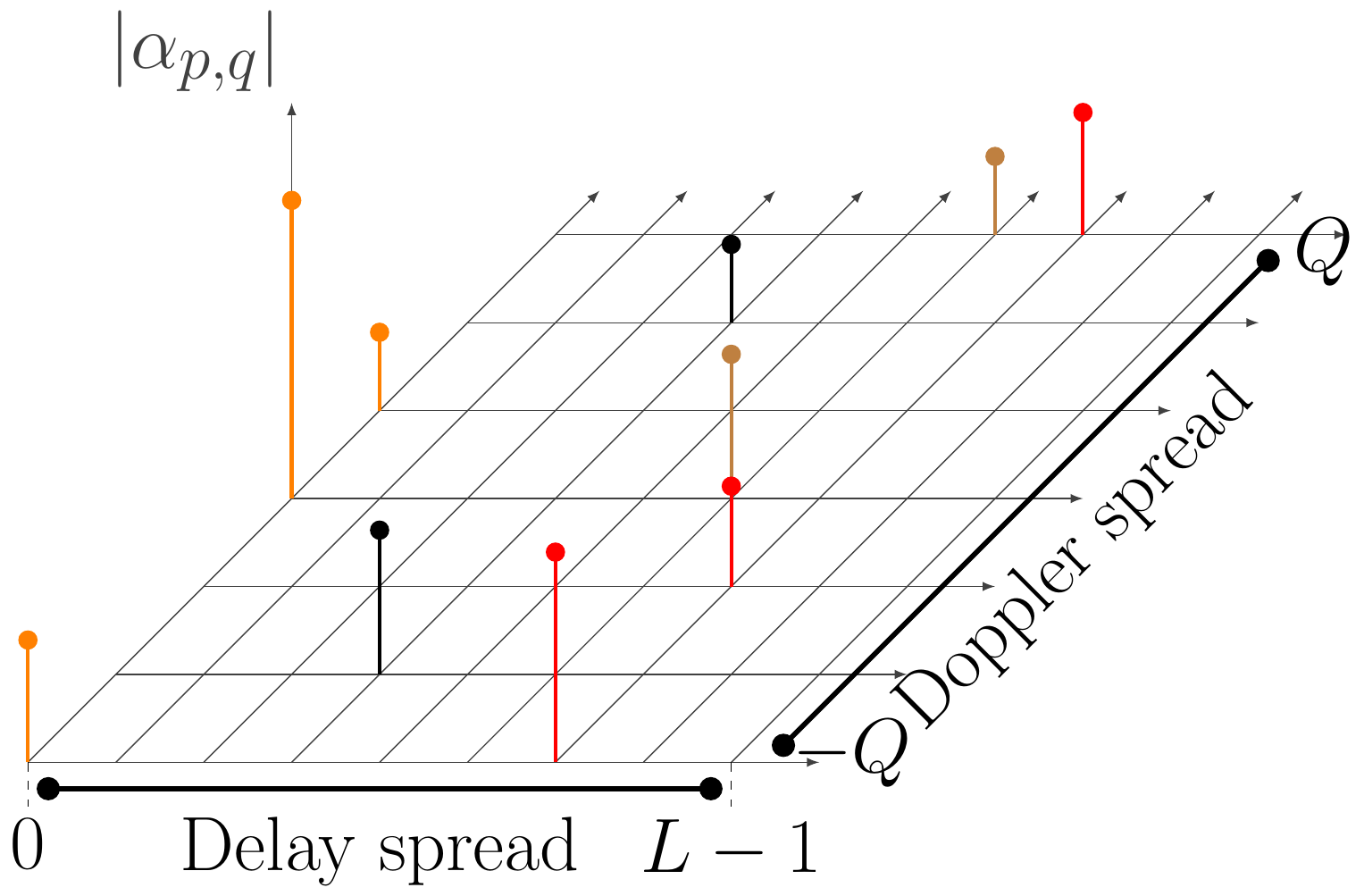} &
      \includegraphics[width=.3\columnwidth]{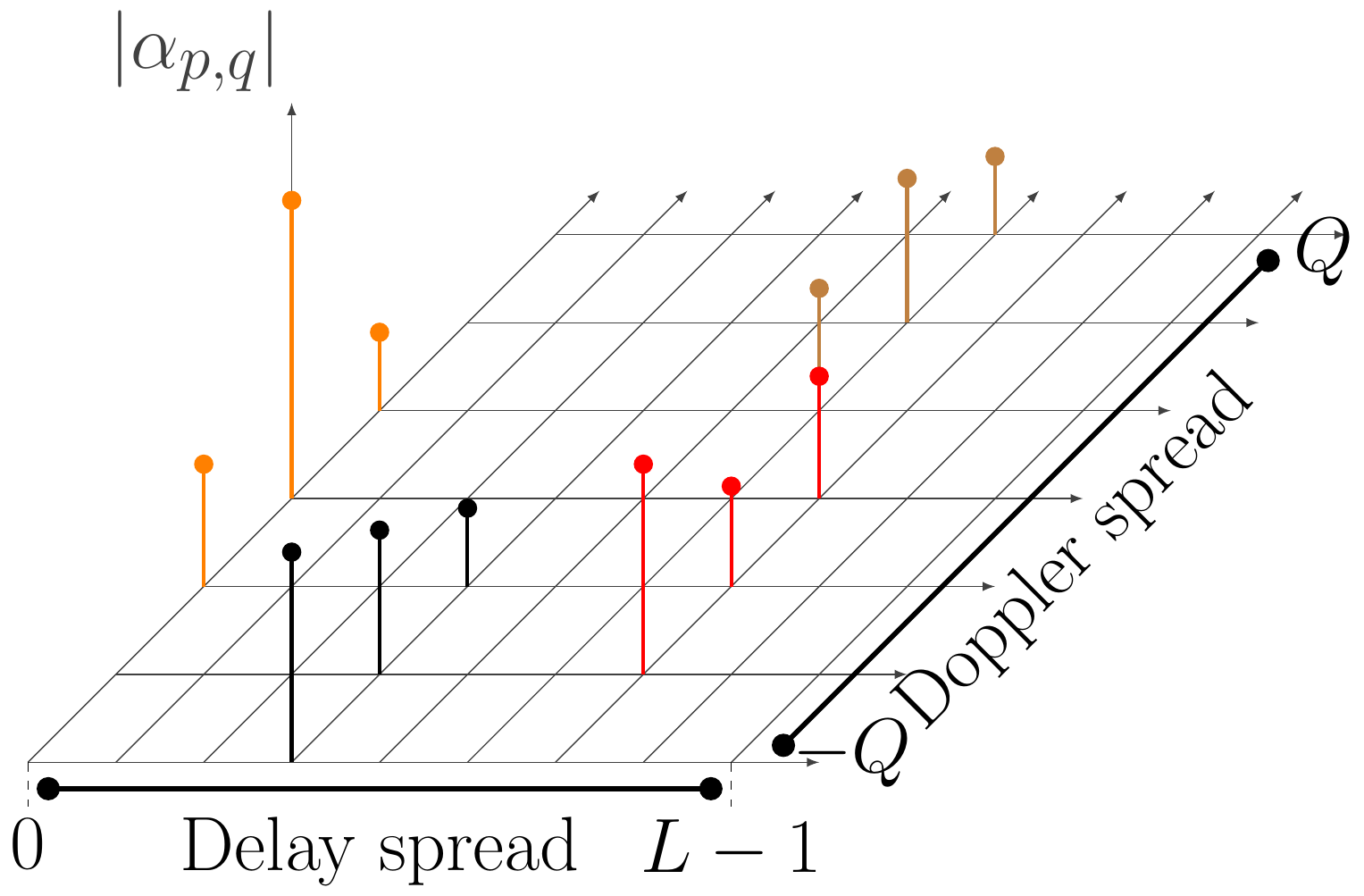} \\
    \small (a) &
      \small (b)&
      \small (c)
  \end{tabular}
  \medskip
  \caption{Examples of channels satisfying (a) Type-1 delay-Doppler sparsity, (b) Type-2 delay-Doppler sparsity, (c) Type-3 delay-Doppler sparsity}
  \label{fig:examples}
\end{figure}

\begin{lemma}
\label{lem:independence}
DS-LTV channels with type-1, type-2 or type-3 delay-Doppler sparsity satisfy Assumption \ref{assum:independence}.   
\end{lemma}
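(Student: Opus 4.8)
The plan is to verify, separately for each of the three sparsity types, the two statements packed into Assumption~\ref{assum:independence}: \textbf{(i)} the marginal $\mathbb{P}[\mathcal{I}_{l,q}]=p_{\rm d}p_{\rm D}$ for every $(l,q)\in\Iintv{0,L-1}\times\Iintv{-Q,Q}$, and \textbf{(ii)} mutual independence of $\{I_{l_t,q_t}\}_{t=1}^{T}$ whenever the delays $l_1,\dots,l_T$ are pairwise distinct \emph{and} the Doppler indices $q_1,\dots,q_T$ are pairwise distinct. The only tool I would use for \textbf{(ii)} is the elementary grouping property of independence: if $X_1,\dots,X_m$ are mutually independent and $S_1,\dots,S_T\subseteq\{1,\dots,m\}$ are pairwise disjoint, then $Y_t\triangleq g_t\big((X_i)_{i\in S_t}\big)$, $t=1\dots T$, are mutually independent for arbitrary measurable $g_t$. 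In each type I would write $I_{l_t,q_t}$ as a function of a small block of the underlying generating variables, the distinctness hypotheses being exactly what makes these blocks pairwise disjoint.

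For \textbf{type-1}, $I_{l_t,q_t}=I_{l_t}I_{q_t}$ depends only on the pair $(I_{l_t},I_{q_t})$; distinctness of the $l_t$'s (resp.\ of the $q_t$'s) keeps $I_{l_1},\dots,I_{l_T}$ (resp.\ $I_{q_1},\dots,I_{q_T}$) distinct members of the generating family, and since the delay- and Doppler-indexed variables carry disjoint labels the $T$ pairs are pairwise disjoint, so the grouping property gives \textbf{(ii)}; the marginal is $\mathbb{P}[\mathcal{I}_{l,q}]=\mathbb{P}[I_l=1]\mathbb{P}[I_q=1]=p_{\rm d}p_{\rm D}$. For \textbf{type-2}, $I_{l_t,q_t}=I_{l_t}I_{q_t}^{(l_t)}$ depends only on $(I_{l_t},I_{q_t}^{(l_t)})$, and now distinctness of the $l_t$'s \emph{alone} makes these pairs disjoint (the second component lives in the $l_t$-indexed block), while $\mathbb{P}[\mathcal{I}_{l,q}]=\mathbb{P}[I_l=1]\mathbb{P}[I_q^{(l)}=1]=p_{\rm d}p_{\rm D}$.

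For \textbf{type-3} the independence part is again pure blocking: $I_q^{(l)}=\mathbf{1}\{-\tfrac{R}{2}<q-\Xi^{(l)}\le\tfrac{R}{2}\}$ is a deterministic function of $\Xi^{(l)}$, so $I_{l_t,q_t}$ depends only on $(I_{l_t},\Xi^{(l_t)})$; since the $L$ per-delay blocks $(I_l,\Xi^{(l)})$ are mutually independent, distinctness of the $l_t$'s yields \textbf{(ii)} — distinctness of the $q_t$'s is not even needed. The marginal is where the real work lies. Writing $\mathcal{S}\triangleq\Iintv{-Q+\lfloor\frac{R-1}{2}\rfloor,Q-\lfloor\frac{R}{2}\rfloor}$ for the (uniform) support of $\Xi^{(l)}$ and using that $I_l$ is independent of $\Xi^{(l)}$,
\begin{equation*}
\mathbb{P}[\mathcal{I}_{l,q}]=p_{\rm d}\,\mathbb{P}[I_q^{(l)}=1]=\frac{p_{\rm d}}{|\mathcal{S}|}\,\big|\{\xi\in\mathcal{S}:\,-\tfrac{R}{2}<q-\xi\le\tfrac{R}{2}\}\big|,
\end{equation*}
and I would then show this equals $p_{\rm d}p_{\rm D}$ with $p_{\rm D}=\frac{R}{2Q-R}$ \emph{for every} $q\in\Iintv{-Q,Q}$. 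This is a counting argument: for an interior bin the admissible centre set $\{\xi\in\mathbb{Z}:q-\tfrac{R}{2}<\xi\le q+\tfrac{R}{2}\}$ has exactly $R$ elements, all inside $\mathcal{S}$; for a boundary bin part of it spills outside $\mathcal{S}$, and one must check that the resulting cardinality and the denominator $|\mathcal{S}|$ combine to keep the ratio constant. I expect this boundary bookkeeping — and seeing precisely why the floor-function limits defining $\mathcal{S}$ are the right ones — to be the main obstacle; everything else reduces to applying the grouping property three times. Assembling \textbf{(i)} and \textbf{(ii)} for all three types proves the lemma.
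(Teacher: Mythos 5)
Your plan follows essentially the same route as the paper's proof: factor each $I_{l_t,q_t}$ through a small block of the generating variables and invoke independence of disjoint blocks, then compute the marginal by counting. The independence half is fine, and in fact more complete than what the paper writes down (the appendix only treats type-3 with $T=2$ and waves at the rest); your observation that for types 2 and 3 distinctness of the $q_t$'s is not needed is correct. Your type-1 and type-2 marginals are immediate, as you say.

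The one step you defer — the type-3 boundary bookkeeping — is exactly where the argument does not close as you hope, and it is also the step the paper's own appendix glosses over. With $\mathcal{S}=\Iintv{-Q+\lfloor\frac{R-1}{2}\rfloor,Q-\lfloor\frac{R}{2}\rfloor}$ one has $|\mathcal{S}|=2Q-\lfloor\frac{R}{2}\rfloor-\lfloor\frac{R-1}{2}\rfloor+1=2Q-R+2$, while the admissible centres for a given $q$ are the $R$ integers in $\Iintv{q-\lfloor\frac{R}{2}\rfloor,\,q+\lfloor\frac{R-1}{2}\rfloor}$; these all lie in $\mathcal{S}$ only when $-Q+R-1\le q\le Q-R+1$. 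For the $2(R-1)$ boundary bins the intersection with $\mathcal{S}$ has fewer than $R$ elements (a single element at $q=\pm Q$), so $\mathbb{P}[I_q^{(l)}=1]$ is \emph{not} constant in $q$; and even for interior $q$ the ratio is $\frac{R}{2Q-R+2}$, which differs from the stipulated $p_{\rm D}=\frac{R}{2Q-R}$. So the check you propose ("the cardinality and the denominator combine to keep the ratio constant") fails as literally stated: the marginal condition of Assumption~\ref{assum:independence} holds for type-3 only approximately (up to $O(R/Q)$ edge effects), or exactly only after redefining $p_{\rm D}$ as $\frac{R}{2Q-R+2}$ and restricting to interior Doppler bins. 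You correctly identified this as the crux; be aware that resolving it requires amending either Definition~\ref{Def:dD_cluster_sparsity} or the statement of the marginal, not just more careful counting.
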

\begin{proof}
The proof of Lemma \ref{lem:independence} is provided in Appendix \ref{app:proof_lemma_assumption} in the case of type-3 delay-Doppler sparsity. The same arguments can be applied for the other two types of sparsity profiles. 
\end{proof}
The above models are not exhaustive. For instance, block sparsity can be extended to the delay domain.
Furthermore, each model can be extended by removing the on-grid approximation. In that case, $I_{l,q}$ will only represent the closest grid point in the delay-Doppler domain to a channel path instead of representing the path itself. Nonetheless, the three models capture important features of wireless channels in high frequency bands that are subject to user mobility.

 \subsection{DS-LTV channel estimation}
 Let $\boldsymbol{\alpha}\triangleq[\alpha_{l,q}]_{(l,q)\mathrm{s.t.}I_{l,q}=1}$ designate the vectorized form of the unknown channel gains associated with active delay-Doppler components. Let $\mathcal{P}\subset\Iintv{0,N-1}$ designate the indexes of the $M\left(2|c_1|N(L-1)+2Q+1\right)$ received samples associated with $M$ DAFT domain pilots (see Figure \ref{fig:pilot_pattern}) and define $\mathbf{y}_{\rm p}\triangleq[y_k]_{k\in\mathcal{P}}$ as the vectorized form of those received samples. Referring to \eqref{eq:y_output_integer}, we can write
 \begin{equation}
     \label{eq:yp}
     \mathbf{y}_{\rm p}=\underbrace{\mathbf{A}_{\cal P}\mathbf{M}\mathbf{A}_{\alpha}}_{\triangleq\mathbf{M}_{\rm p}}\boldsymbol{\alpha}+\mathbf{w}_{\rm p}
 \end{equation}
 where $[\mathbf{M}]_{l(2Q+1)+Q+q+1}=\boldsymbol{\Phi}\boldsymbol{\Delta}_q \boldsymbol{\Pi}^l \boldsymbol{\Phi}^H \mathbf{x}_{\rm p}$, $\mathbf{x}_{\rm p}$ is a $N$-long vector with entries equal to $p_1, \ldots, p_M$ (see Figure \ref{fig:pilot_pattern}) at the indexes of pilot symbols and to zero elsewhere, and $\mathbf{w}_{\rm p}\triangleq[\Tilde{{w}}]_{k\in\mathcal{P}}$. Here, $\mathbf{A}_{\cal P}$ is the $|\mathcal{P}|\times N$ matrix that chooses from a $N$-long vector the entries corresponding to $\mathcal{P}$, $\mathbf{A}_{\alpha}$ is the matrix that augments $\boldsymbol{\alpha}$ with zeros corresponding to $I_{l,q}=0$ resulting in a $L(2Q+1)$-long vector $\mathbf{A}_{\alpha}\boldsymbol{\alpha}$, $\boldsymbol{\Lambda}_{c}=\mathrm{diag}(e^{-\imath2\pi cn^2},n=0\cdots N-1)$, $\boldsymbol{\Delta}_q=\mathrm{diag}(e^{\imath2\pi qn},n=0\cdots N-1)$, $\boldsymbol{\Pi}$ is the $N$-order permutation matrix, $\boldsymbol{\Phi}=\pmb{\Lambda}_{c2} \mathbf{F}_N \pmb{\Lambda}_{c1}$ and $\mathbf{F}_N$ is the $N$-order discrete Fourier transform (DFT) matrix. The minimum mean squared error (MMSE) estimate, $\hat{\boldsymbol{\alpha}}$, of $\boldsymbol{\alpha}$ based on $\mathbf{y}_{\rm p}$ is given by \cite{kay2013fundamentals}
 \begin{equation}
     \label{eq:lmmse_alpha}
     \hat{\boldsymbol{\alpha}}=\sigma_{\alpha}^2(\sigma_{\alpha}^2\mathbf{M}_{\rm p}^{\rm H}\mathbf{M}_{\rm p}+\sigma_w^2\mathbf{I})^{-1}\mathbf{M}_{\rm p}^{\rm H}\mathbf{y}_{\rm p}\:.
 \end{equation}
 Note that the knowledge of the delay-Doppler profile (DDP) i.e., of $\left\{I_{l,q}\right\}_{l=0\cdots L-1,q=-Q\cdots Q}$, at the receiver side is here assumed. This bears similarities with the knowledge of the power delay profile (PDP) for linear time-invariant (LTI) channel estimation \cite{PDP}. The case where the DDP is not known will be addressed in future works using appropriate tools such as compressed sensing (CS) \cite{CS_for_CE,mdpss}.
 \begin{figure}
  \centering
  \includegraphics[scale=.45]{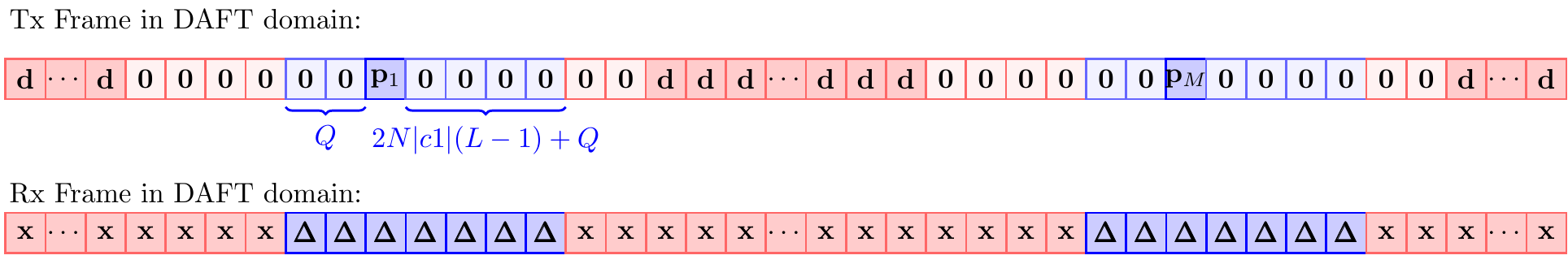}
  \caption{An AFDM symbol composed of data samples, $M$ pilot symbols and their guard samples. The received DAFT domain samples with indexes in $\mathcal{P}$ used for channel estimation are marked each with the symbol $\Delta$}
  \label{fig:pilot_pattern}
\end{figure}

For any $(l,q)$ satisfying $I_{l,q}=1$, define $\hat{\alpha}_{l,q}$ as the corresponding entry of vector $\hat{\boldsymbol{\alpha}}$. For any $(l,q)\in\Iintv{0,L-1}\times\Iintv{-Q,Q}$ such that $I_{l,q}=0$, set $\hat{\alpha}_{l,q}=0$. Finally, define
 \begin{equation}
    \label{eq:hat_h_def}
    \textstyle\hat{h}_{l,n}\triangleq\sum_{q=-Q}^{Q}\hat{\alpha}_{l,q}e^{\imath 2\pi\frac{nq}{N}},\quad n=0,\ldots, N-1\:.
 \end{equation}
as the resulting MMSE estimate of $h_{l,n}$. In what follows we give indications on how to set $M$ and $c_1$ based on the delay-Doppler sparsity level of the channel so that the minimal pilot overhead needed to guarantee a vanishing (with respect to an increasing signal-to-noise ratio (SNR)) mean squared error (MSE) $\mathbb{E}[\sum_{l=0}^{L-1}\frac{1}{N}\sum_{n=0}^{N-1}|h_{l,n}-\hat{h}_{l,n}|^2]=\mathbb{E}[\|\hat{\boldsymbol{\alpha}}-\boldsymbol{\alpha}\|^2]$.

\section{AFDM parameters setting for transmission over DS-LTV channels}
\label{sec:afdm_tuning}
Let $u_{m}^{(l,q)}$ ($m\in\mathbb{Z}$) be the individual DAFT domain impulse response of the part of the channel associated with delay-Doppler component $\alpha_{l,q}$. Since $\Iintv{0,L-1}\times\Iintv{-Q,Q}$ in the delay-Doppler domain maps to an interval in the DAFT domain that is either $\Iintv{-Q,2|c_1|N(L-1)+Q}$ if $c_1$ is negative or $\Iintv{-Q-2|c_1|N(L-1),Q}$ if $c_1$ is positive, the latter interval is the support of $u_{m}^{(l,q)}$. We designate by \emph{DAFT domain representation of the channel} the collection $\{u_{m}^{(l,q)}\}_{(l,q)\in\Iintv{0,L-1}\times\Iintv{-Q,Q}}$ of all individual DAFT domain impulse responses. Figure \ref{fig:DAFT}-(a) shows the DAFT domain representation of a channel in the case $c_1=\frac{-1}{2N}$ while Figure \ref{fig:DAFT}-(b) shows that representation when $c_1=\frac{-2}{2N}$.
\begin{figure}
  \centering
  \begin{tabular}{ c @{\hspace{2pt}} c }
  \includegraphics[width=.45\columnwidth]{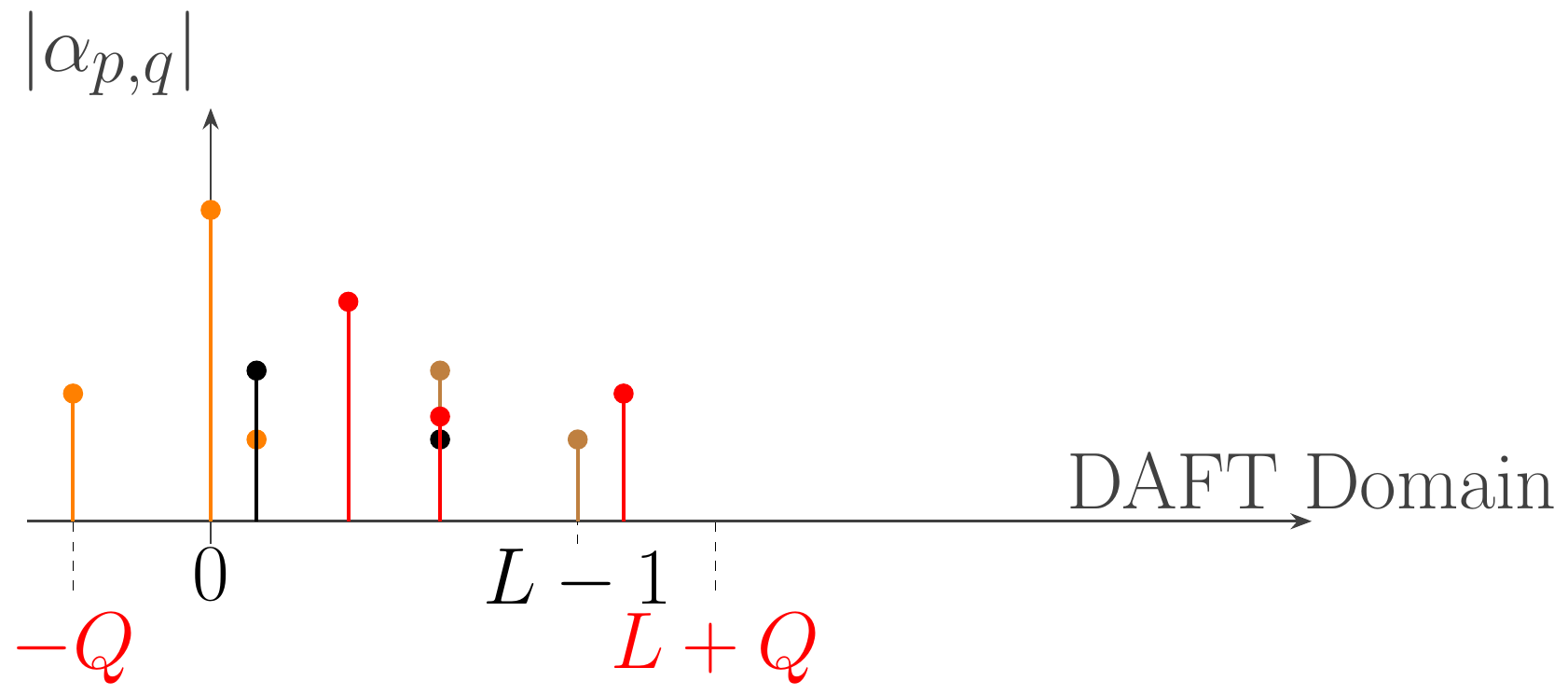} &
    \includegraphics[width=.45\columnwidth]{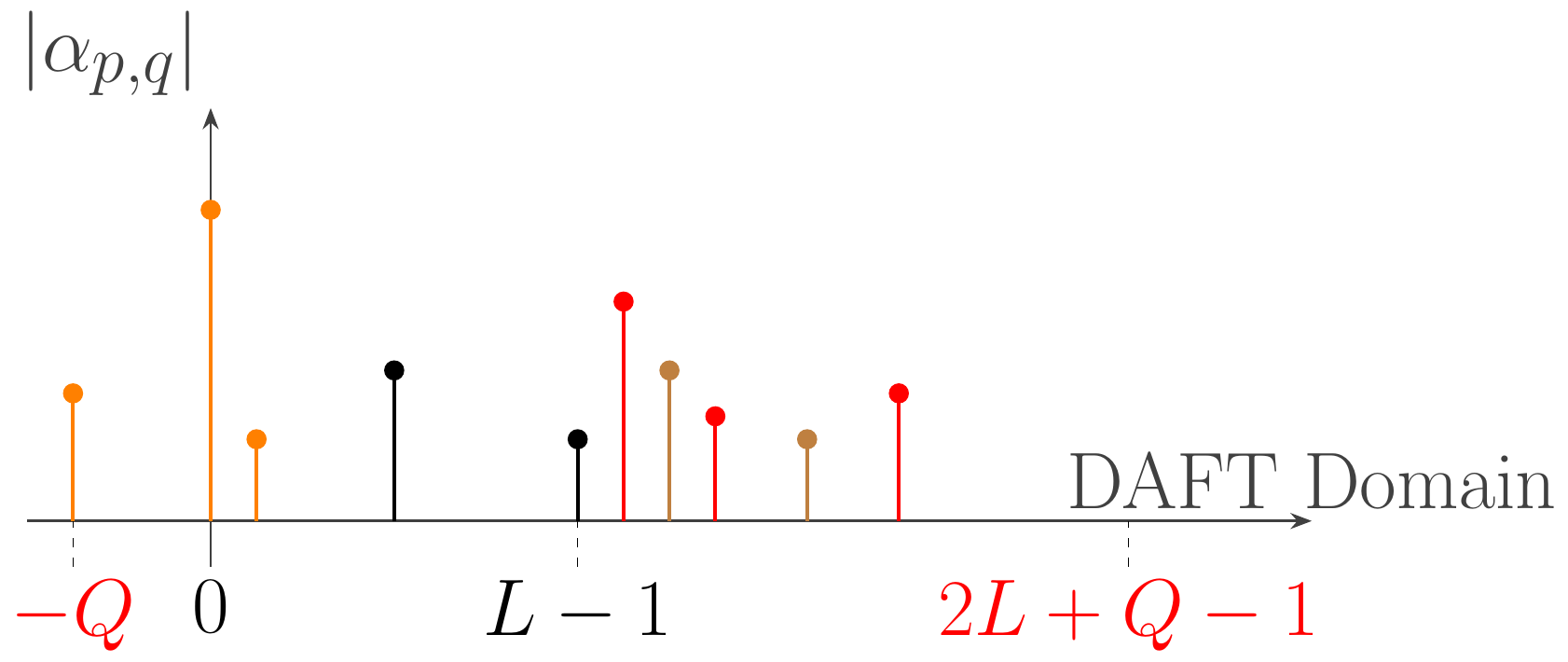} \\
    \small (a) &
      \small (b)
  \end{tabular}
  \medskip
  \caption{DAFT domain representation of the channel realization of Figure \ref{fig:examples}-(b) for different values of $c_1$, (a) $c_1=-1/2N$, (b) $c_1=-2/2N$}
  \label{fig:DAFT}
\end{figure}
In what follows we restrict $c_1$ to be negative without loss of generality. Define the random variable
$X_k\triangleq\left|\left\{(l,q)\in\Iintv{0,L-1}\times\Iintv{-Q,Q},I_{l,q}=1,q-2c_1Nl=k\right\}\right|$ for any $k\in\Iintv{-Q,2|c_1|N(L-1)+Q+1}$,  i.e., $X_k$ is the number of non-zero components $\alpha_{l,q}$ appearing at index $k$ in the DAFT domain representation. It is also the number of terms in the mixture of complex sinusoids that constitute the sample $y_k$ in \eqref{eq:y_output_integer} and is thus closely related to pilot overhead and channel estimation performance.  For instance, under a given channel realization, the minimal number of DAFT domain pilots needed for full identifiability i.e., for the measurement matrix $\mathbf{M}_{\rm p}$ in \eqref{eq:yp} to have full column rank, should be at least equal to ${\max}_{k\in\Iintv{-Q,2|c_1|N(L-1)+Q+1}}X_k$. We will also show that the distribution of $X_k$ affects directly the MSE of  $\hat{\boldsymbol{\alpha}}$.This is why we examine in what follows that probability distribution.
%
First, by referring to the signal relation in \eqref{eq:y_output_integer} we can see that with ``enough'' sparsity i.e., if the number of nonzero channel components is sufficiently smaller than the support $\Iintv{-Q,2|c_1|N(L-1)+Q+1}$ of the channel DAFT domain representation, it is unlikely that $X_k$ takes large values and hence it is unlikely that a large number of DAFT domain pilots would be needed to get a target estimation error performance. This can be seen from Figure \ref{fig:DAFT} where ${\max}_{k}X_k=3$ when $\left|c_1\right|=\frac{1}{2N}$ and ${\max}_{k}X_k=1$ when $\left|c_1\right|=\frac{2}{2N}$. The following lemma and the ensuing theorem give a rigorous confirmation of the above intuition.
\begin{lemma}
\label{lem:sub_binomial}
For $c_1=-\frac{P}{2N}$ ($P\in\mathbb{N}^*$) and any $k\in\Iintv{-Q,2|c_1|N(L-1)+Q}$ the complementary cumulative distribution function (CCDF) of $X_k$ under Assumption \ref{assum:independence} is upper-bounded by the CCDF of $\mathrm{B}((2\lceil\frac{Q}{P}\rceil+1,p_{\rm d} p_{\rm D})$.
\end{lemma}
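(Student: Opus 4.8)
The plan is to write $X_k$ explicitly as a sum of Bernoulli indicators over delay taps, bound the number of summands, and then use Assumption~\ref{assum:independence} to identify that sum with a binomial up to stochastic domination. First, since $c_1=-\tfrac{P}{2N}$ we have $-2c_1N=P$, so the constraint $q-2c_1Nl=k$ in the definition of $X_k$ reads $q=k-Pl$. Thus each delay tap $l$ contributes to $X_k$ through at most one Doppler index, namely $q=k-Pl$, and it does so exactly when that index lies in $\Iintv{-Q,Q}$ and $I_{l,k-Pl}=1$. Writing $\mathcal{L}_k\triangleq\{l\in\Iintv{0,L-1}\,:\,-Q\le k-Pl\le Q\}$, this yields $X_k=\sum_{l\in\mathcal{L}_k}I_{l,k-Pl}$.

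Second, the inequalities defining $\mathcal{L}_k$ are equivalent to $\tfrac{k-Q}{P}\le l\le\tfrac{k+Q}{P}$, an interval of length $\tfrac{2Q}{P}$, hence containing at most $\lfloor\tfrac{2Q}{P}\rfloor+1\le 2\lceil\tfrac{Q}{P}\rceil+1$ integers; intersecting with $\Iintv{0,L-1}$ only decreases this count, so $|\mathcal{L}_k|\le\min\bigl(L,\,2\lceil\tfrac{Q}{P}\rceil+1\bigr)\le\min(L,2Q+1)$. Moreover the pairs $\{(l,k-Pl)\}_{l\in\mathcal{L}_k}$ have pairwise distinct delays by construction and pairwise distinct Dopplers because $P\ge1$ makes $k-Pl$ strictly monotone in $l$. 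These are precisely the hypotheses of Assumption~\ref{assum:independence}, which therefore makes $\{I_{l,k-Pl}\}_{l\in\mathcal{L}_k}$ mutually independent, each of law $\mathrm{Bernoulli}(p_{\rm d}p_{\rm D})$; consequently $X_k\sim\mathrm{B}\bigl(|\mathcal{L}_k|,p_{\rm d}p_{\rm D}\bigr)$.

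Finally, because $|\mathcal{L}_k|\le 2\lceil\tfrac{Q}{P}\rceil+1$, appending $2\lceil\tfrac{Q}{P}\rceil+1-|\mathcal{L}_k|$ extra independent $\mathrm{Bernoulli}(p_{\rm d}p_{\rm D})$ variables couples $X_k$ to a $\mathrm{B}(2\lceil\tfrac{Q}{P}\rceil+1,p_{\rm d}p_{\rm D})$ variable that dominates it pathwise, whence $\mathbb{P}[X_k\ge t]\le\mathbb{P}[\mathrm{B}(2\lceil\tfrac{Q}{P}\rceil+1,p_{\rm d}p_{\rm D})\ge t]$ for every $t$, which is the asserted CCDF bound. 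The step needing the most care is the joint verification that the contributing pairs have both distinct delays and distinct Dopplers and that their number stays $\le\min(L,2Q+1)$, so that Assumption~\ref{assum:independence} genuinely applies; this has to be checked in the boundary cases $P=1$ and $k$ near the ends of $\Iintv{-Q,P(L-1)+Q}$. Everything else is bookkeeping.
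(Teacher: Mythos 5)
Your proof is correct and follows essentially the same route as the paper's: identify the set of delay taps that can contribute at DAFT index $k$, check that the corresponding pairs have distinct delays and distinct Dopplers so that Assumption~\ref{assum:independence} yields mutual independence, conclude $X_k\sim\mathrm{B}(|\mathcal{L}_k|,p_{\rm d}p_{\rm D})$ with $|\mathcal{L}_k|\le 2\lceil Q/P\rceil+1$, and finish by monotonicity of the binomial CCDF in the number of trials (which you realize via an explicit coupling rather than by citing the CDF monotonicity). If anything, your version is slightly more complete, since you treat general $P$ directly and write $X_k$ as a sum of independent indicators instead of enumerating the pmf over $M$-subsets as the paper does for $P=1$.
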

\begin{proof}
The proof of Lemma \ref{lem:sub_binomial} is provided in Appendix \ref{app:proof_lemma_sub_nomial}. 
\end{proof}
For tractability and more insights, the asymptotic regime for $N,L,Q$ defined by the following assumption will be helpful. Note that the numerical results given in Section \ref{sec:simulations} are not asymptotic but are obtained with finite values of $N,L,Q$.
\begin{assumption}
    \label{assum:bigO}
    $L=O(K)$, $Q=O(K)$, $p_{\rm d}L=O\left(K^{\kappa_{\rm d}}\right)$ and $p_{\rm D}(2Q+1)=O\left(K^{\kappa_{\rm D}}\right)$ for some $\kappa_{\rm d},\kappa_{\rm D}\in[0,1)$.
\end{assumption}
\begin{remark}
    Assuming $L=O(K)$ and $Q=O(K)$ implies that $N=O\left(K^2\right)$. Indeed, assuming that the maximum delay $L$ increases to infinity as $K$ implies that the transmission bandwidth increases at the same rate. Also, assuming that the maximum Doppler shift $Q$ increases as $K$ implies that the transmission duration increases at the same rate. Therefore, the frame size in samples i.e., $N$, increases as $K^2$.
\end{remark}
\begin{theorem}
    \label{theo:adaptive_M}
    Under Assumption \ref{assum:bigO} and the conditions of Lemma \ref{lem:sub_binomial}, if we set $P$ s.t. $(L-1)P+2Q+1=O(p_{\rm d} L p_{\rm D} (2Q+1))$ then DAFT domain pilot overhead needed for the MSE $\mathbb{E}[\|\hat{\boldsymbol{\alpha}}-\boldsymbol{\alpha}\|^2]$ to tend to zero as $K\to\infty$ and $\sigma_w^2\to 0$ is $O(K^{\kappa_{\rm d}+\kappa_{\rm D}})$.
\end{theorem}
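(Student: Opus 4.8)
The plan is to reduce the requirement that the MSE vanish to a rank condition on $\mathbf{M}_{\rm p}$, then to a statement about the counts $X_k$, and finally to invoke Lemma~\ref{lem:sub_binomial} together with a suitable \emph{joint} choice of the pilot number $M$ and the chirp index $P$. Conditioning on the DDP (known at the receiver), \eqref{eq:lmmse_alpha} gives $\mathbb{E}[\|\hat{\boldsymbol{\alpha}}-\boldsymbol{\alpha}\|^2\mid\mathrm{DDP}]=\sigma_\alpha^2\,\mathrm{tr}\big[(\mathbf{I}+\tfrac{\sigma_\alpha^2}{\sigma_w^2}\mathbf{M}_{\rm p}^{\rm H}\mathbf{M}_{\rm p})^{-1}\big]$. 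Letting $\sigma_w^2\to 0$, each strictly positive eigenvalue of $\mathbf{M}_{\rm p}^{\rm H}\mathbf{M}_{\rm p}$ makes its term vanish while each zero eigenvalue contributes $1$, so the conditional MSE tends to $\sigma_\alpha^2\big(\sum_{l,q}I_{l,q}-\mathrm{rank}\,\mathbf{M}_{\rm p}\big)$; averaging over the DDP and using $\sigma_\alpha^2=\big(p_{\rm d}Lp_{\rm D}(2Q+1)\big)^{-1}$, it remains to bound $\mathbb{E}\big[\sum_{l,q}I_{l,q}-\mathrm{rank}\,\mathbf{M}_{\rm p}\big]$ and multiply by $\sigma_\alpha^2$.

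I would place the $M$ DAFT-domain pilots at the indices $0,D,2D,\ldots,(M-1)D$ with $D\geq 2|c_1|N(L-1)+2Q+1$ (so that their DAFT responses occupy disjoint segments) and $\gcd(D,N)=1$. Expanding \eqref{eq:y_output_integer} with $c_1=-P/(2N)$, the entry of $\mathbf{M}_{\rm p}$ attached to the $m$-th pilot and to the component $(l,q)$ whose DAFT position $q-2c_1Nl$ equals a given relative index $v$ factors as (a nonzero scalar depending only on $m$ and $v$) $\times e^{-\imath\pi Pl^2/N}\times\big(e^{-\imath2\pi Dl/N}\big)^{m}$. Hence, after a row/column permutation, $\mathbf{M}_{\rm p}$ is block diagonal with one block per DAFT position $k\in\Iintv{-Q,2|c_1|N(L-1)+Q}$; the $k$-th block has $M$ rows, exactly $X_k$ columns, and, up to invertible diagonal row and column scalings, is a Vandermonde matrix whose nodes $e^{-\imath2\pi Dl/N}$ are pairwise distinct (colliding components have pairwise distinct delays $l\in\Iintv{0,L-1}$ and $\gcd(D,N)=1$ with $N>L$). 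Therefore block $k$ has rank $\min(M,X_k)$ and $\sum_{l,q}I_{l,q}-\mathrm{rank}\,\mathbf{M}_{\rm p}=\sum_k(X_k-M)_+$, the sum running over the $2|c_1|N(L-1)+2Q+1=(L-1)P+2Q+1$ DAFT positions.

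Since $x\mapsto(x-M)_+$ is nondecreasing, Lemma~\ref{lem:sub_binomial} gives $\mathbb{E}[(X_k-M)_+]\leq\mathbb{E}[(Y-M)_+]$ with $Y\sim\mathrm{B}(2\lceil Q/P\rceil+1,p_{\rm d}p_{\rm D})$, so
\[ \lim_{\sigma_w^2\to 0}\mathbb{E}\big[\|\hat{\boldsymbol{\alpha}}-\boldsymbol{\alpha}\|^2\big]=\sigma_\alpha^2\sum_k\mathbb{E}\big[(X_k-M)_+\big]\leq\frac{(L-1)P+2Q+1}{p_{\rm d}Lp_{\rm D}(2Q+1)}\,\mathbb{E}\big[(Y-M)_+\big]. \]
Note that the hypothesis $(L-1)P+2Q+1=O(p_{\rm d}Lp_{\rm D}(2Q+1))$ already forces $\kappa_{\rm d}+\kappa_{\rm D}\geq 1$ (its left side is $\Omega(K)$ and, by Assumption~\ref{assum:bigO}, its right side is $O(K^{\kappa_{\rm d}+\kappa_{\rm D}})$). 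Taking $M$ and $P$ both of order $K^{(\kappa_{\rm d}+\kappa_{\rm D}-1)/2}$, one checks under Assumption~\ref{assum:bigO} that $\mathbb{E}[Y]=(2\lceil Q/P\rceil+1)p_{\rm d}p_{\rm D}=O(M)$, whence $\mathbb{E}[(Y-M)_+]\leq\mathbb{E}[(Y-\mathbb{E}Y)_+]\leq\tfrac12\sqrt{\mathrm{Var}\,Y}=O(\sqrt M)$, while the prefactor in the display is $O(1/M)+O\big((p_{\rm d}Lp_{\rm D})^{-1}\big)=o(1)$; multiplying, the MSE is $O\big(K^{(1-\kappa_{\rm d}-\kappa_{\rm D})/4}\big)\to 0$. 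Finally, for this choice the DAFT-domain overhead is $M\big((L-1)P+2Q+1\big)=\Theta(MPL)+\Theta(MQ)=O(K^{\kappa_{\rm d}+\kappa_{\rm D}})$, and $P$ complies with the hypothesis, which finishes the proof.

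The main obstacle is the second step: proving that an \emph{economical} pilot/guard pattern forces rank deficiency at most $\sum_k(X_k-M)_+$ --- i.e., non-degeneracy of every per-position sub-block for the exact phases of \eqref{eq:y_output_integer} --- while the $M$ disjoint guards still sum to $O(K^{\kappa_{\rm d}+\kappa_{\rm D}})$, which is possible because $M$ is chosen to grow more slowly than the per-pilot guard length. A secondary difficulty is the balancing in the last step: increasing $P$ shrinks the binomial size in Lemma~\ref{lem:sub_binomial} (fewer DAFT-domain collisions, hence fewer pilots needed) but enlarges each guard, and the estimate closes exactly because the constraint placed on $P$ sets these two effects in the right ratio.
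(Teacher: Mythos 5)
Your argument is essentially correct and reaches the stated conclusion, but it follows a genuinely different route from the paper's. The paper keeps the number of pilots constant, $M=O(1)$, splits the MSE component-wise as in \eqref{eq:total_exp} into the events $\{I_{l,q}=0\}$, $\{0<X_{k_{l,q}}\leq M\}$ and $\{X_{k_{l,q}}>M\}$, and controls the last event via a Chernoff bound on the dominating binomial of Lemma \ref{lem:sub_binomial}, concluding with a joint bound of the form $C(\sigma_w^2+\tfrac{1}{K})$. You instead (i) pass to the exact noiseless limit of the MMSE error, $\sigma_\alpha^2\,\mathbb{E}\bigl[\sum_{l,q}I_{l,q}-\mathrm{rank}\,\mathbf{M}_{\rm p}\bigr]$, (ii) prove the identifiability fact that the paper's sketch only asserts --- namely that with equispaced pilots, coprime spacing and disjoint guards, $\mathbf{M}_{\rm p}$ is block-diagonal with per-position Vandermonde blocks of rank $\min(M,X_k)$, so the rank deficit is exactly $\sum_k(X_k-M)_+$ --- and (iii) let $M$ grow as $K^{(\kappa_{\rm d}+\kappa_{\rm D}-1)/2}$ jointly with $P$, bounding $\mathbb{E}[(X_k-M)_+]$ by stochastic dominance and a variance estimate rather than by a tail bound. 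Step (ii) is a real addition, and step (iii) is arguably more robust than the paper's: with $P=\Theta(K^{\kappa_{\rm d}+\kappa_{\rm D}-1})$ the dominating binomial has mean $\Theta(1)$, so $\mathbb{P}[X_k>M]$ for \emph{fixed} $M$ is a constant, and your growing $M$ is what actually makes the residual vanish while keeping $M\bigl((L-1)P+2Q+1\bigr)=O(K^{\kappa_{\rm d}+\kappa_{\rm D}})$.

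Two caveats. First, your final rate $O(K^{(1-\kappa_{\rm d}-\kappa_{\rm D})/4})$ vanishes only when $\kappa_{\rm d}+\kappa_{\rm D}>1$; at the boundary $\kappa_{\rm d}+\kappa_{\rm D}=1$ your $M$ is $\Theta(1)$ and the bound stalls at a constant (any $M\to\infty$ then inflates the overhead beyond $O(K)$), so that case needs a separate remark --- though the paper's own sketch has the same blind spot. Also note that $\mathbb{E}[(Y-M)_+]\leq\mathbb{E}[(Y-\mathbb{E}Y)_+]$ requires $M\geq\mathbb{E}Y$, not merely $\mathbb{E}Y=O(M)$, so you must fix the constant in $M$ accordingly. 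Second, you only establish the iterated limit $\lim_{K\to\infty}\lim_{\sigma_w^2\to 0}\mathrm{MSE}=0$; a bound valid jointly in $(K,\sigma_w^2)$, as the paper claims, would additionally require a lower bound on the nonzero singular values of the Vandermonde blocks, which neither you nor the paper's sketch supplies. Under the natural reading of the theorem statement the iterated limit suffices, so this is a presentational rather than a substantive gap.
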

\begin{proof}
A sketch of the proof is given in Appendix \ref{app:proof_theorem_adaptive_M}.
\end{proof}
\begin{remark}
    Theorem \ref{theo:adaptive_M} implies that AFDM is order-optimal in terms of channel estimation overhead for DS-LTV channels since the total overhead needed for vanishing channel estimation MSE has the same asymptotic order as the smallest possible overhead which is equal to the average number of unknowns and thus to $\mathbb{E}\left[\left|\left\{(l,q)\in\Iintv{0,L-1}\times\Iintv{-Q,Q}\mathrm{s.t.} I_{l,q}=1\right\}\right|\right]=
        p_{\rm d}Lp_{\rm D}(2Q+1)=O\left(K^{\kappa_{\rm d}+\kappa_{\rm D}}\right)$.
    This optimality of AFDM is confirmed by the comparison done in the following section of its channel estimation overhead and channel estimation error performance to that required by OFDM, OTFS and SCM.
\end{remark}

\section{Numerical Results}
\label{sec:simulations}

For a $N$-long SCM transmission, estimating $\boldsymbol{\alpha}$ requires a minimal number $\check{M}_{\min}$ of time domain pilots dispersed throughout the frame and each with $2(L-1)$ guard samples \cite{scm_pilots} as shown in Figure \ref{fig:pilot_pattern_scm}. Using the same arguments as in the proof of Theorem \ref{theo:adaptive_M} gives $\mathbb{E}[\check{M}_{\min}]=p_{\rm D}(2Q+1)$ resulting in a total overhead of $2p_{\rm D}(2Q+1)(L-1)=O(K^{1+\kappa_{\rm D}})$.
\begin{figure}
  \centering
  \includegraphics[scale=.45]{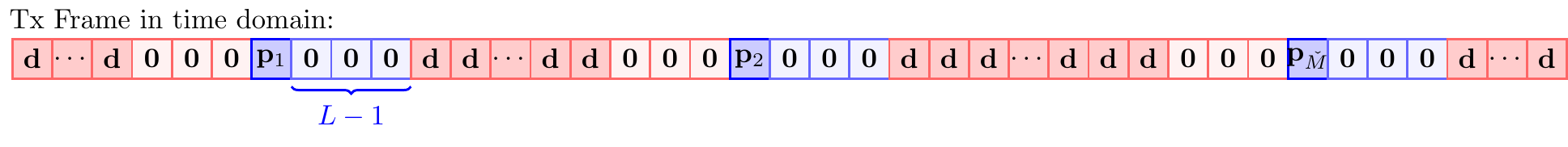}
  \caption{An example of a SCM frame composed of data samples and pilot symbols, each of the latter surrounded by $2L-1$ guard samples.}
  \label{fig:pilot_pattern_scm}
  \vspace{-2mm}
\end{figure}
As for the $N$-long OFDM frame of Figure \ref{fig:pilot_pattern_ofdm}, a minimal number $\check{M}_{\min}$ of OFDM symbols each costing $L-1$ in CP overhead needs each to contain a minimal number $\hat{M}_{\min}$ of pilot subcarriers each with $4Q_0$ guard subcarriers \cite{ofdm_pilots}. If the frame contains at least $2Q+1$ OFDM symbols, the Doppler shifts experienced within each of them is purely fractional and $Q_0$ is a constant that can be set to achieve a target low level of data-pilot interference ($Q_0=1$ in the figure). If $\hat{M}_{\min}(4Q_0+1)$ exceeds the size of the OFDM symbol, the whole symbol is made into pilots and there is no need for its length to exceed $L$. Again, the arguments of the proof of Theorem \ref{theo:adaptive_M} can be used to show that $\mathbb{E}[\hat{M}_{\min}]=p_{\rm d}L$ and $\mathbb{E}[\check{M}_{\min}]=p_{\rm D}(2Q+1)$. The total pilot overhead is thus $p_{\rm D}(2Q+1)(L-1+\min(L,p_{\rm d}L(4Q_0+1)))=O(K^{1+\kappa_{\rm D}})$.
\begin{figure}
  \centering
  \includegraphics[scale=.4]{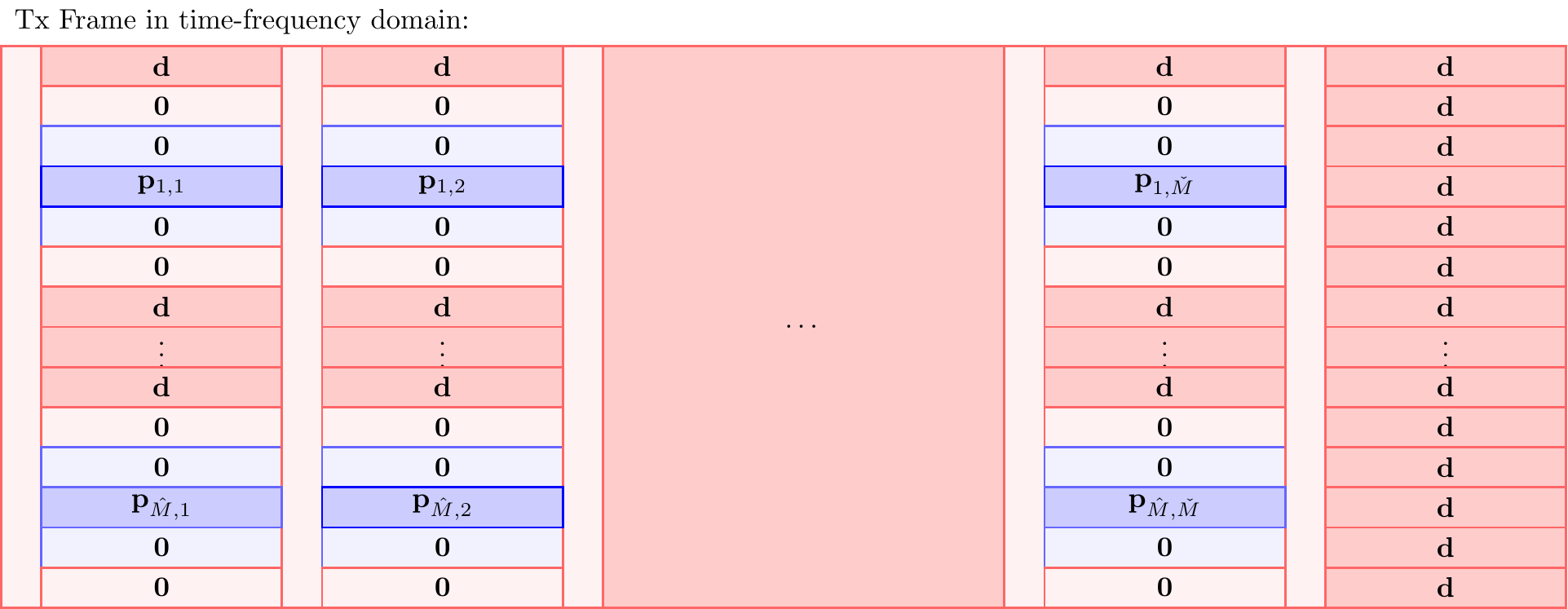}
  \caption{OFDM frame with pilot (blue), guard (light blue and red) and data (red) subcarriers. Each symbol is preceded by $L-1$ CP samples (light red)}
  \label{fig:pilot_pattern_ofdm}

\end{figure}
OTFS with orthogonal data-pilot resources \cite{embedded_otfs_bem} requires as shown in Figure \ref{fig:pilot_pattern_otfs} at least $\min(4Q+1,N_{\rm otfs})\times\min(2L-1,M_{\rm otfs})=O(K^2)$ pilot samples irrespective of sparsity level.
\begin{figure}
  \centering
  \includegraphics[scale=.4]{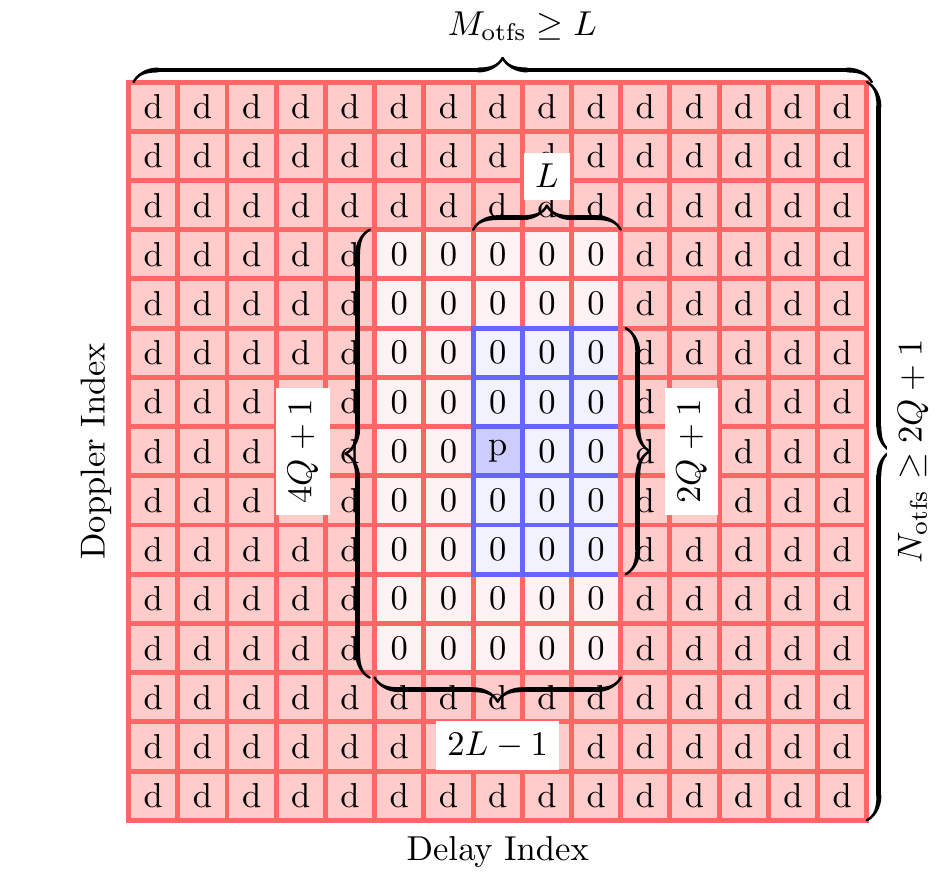}
  \caption{An OTFS symbol composed in the Zak domain of data samples (red), a pilot sample (blue) and guard samples (light blue and red)}
  \label{fig:pilot_pattern_otfs}
\end{figure}

In Figure \ref{fig:simulations1_2} channel estimation MSE of AFDM and OFDM are compared. We used 100 realizations of channels having a type-1 delay-Doppler sparsity for all simulations presented in this section with $N=8192, L=60, Q=15$ (corresponding to a 12 MHz transmission at a 70 GHz carrier frequency, a relative moving speed of 340 km/h and a delay spread of 5 $\mu\mathrm{s}$) and $p_{\rm d}=0.2$. In solid lines, the number $M$ of AFDM pilots and $\check{M}$ of OFDM pilots were set in each channel realization a certain amount above $M_{\min}$ and $\check{M}_{\min}$ respectively to get a $10^{-3}$ MSE at $\mathrm{SNR}=20$ dB. The dashed line is the MSE of OFDM with $\check{M}$ reduced to make pilot overhead equal to that of AFDM. As dictated by Theorem \ref{theo:adaptive_M}, In Figure \ref{fig:simulations1_2}-(a) where $p_{\rm D}=0.2$ AFDM with $P=1$ has the lowest overhead with $\mathbb{E}[M]= 7$ while it is AFDM with $P=2$ and $\mathbb{E}[M]= 7$ in Figure \ref{fig:simulations1_2}-(b) where $p_{\rm D}=0.3$.
  \begin{figure}
  \centering
  \begin{tabular}{ c  }
  \includegraphics[width=.7\columnwidth]{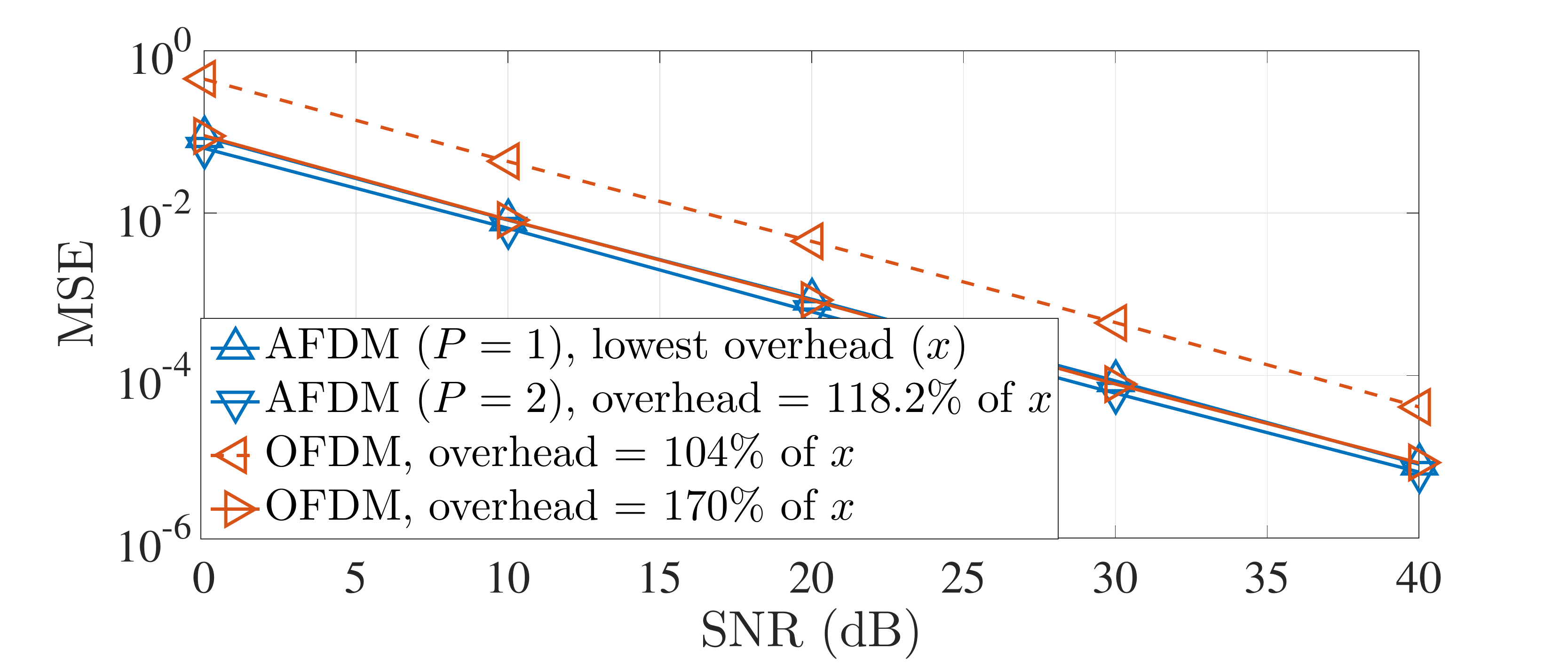} \\
  \small (a) \\
    \includegraphics[width=.7\columnwidth]{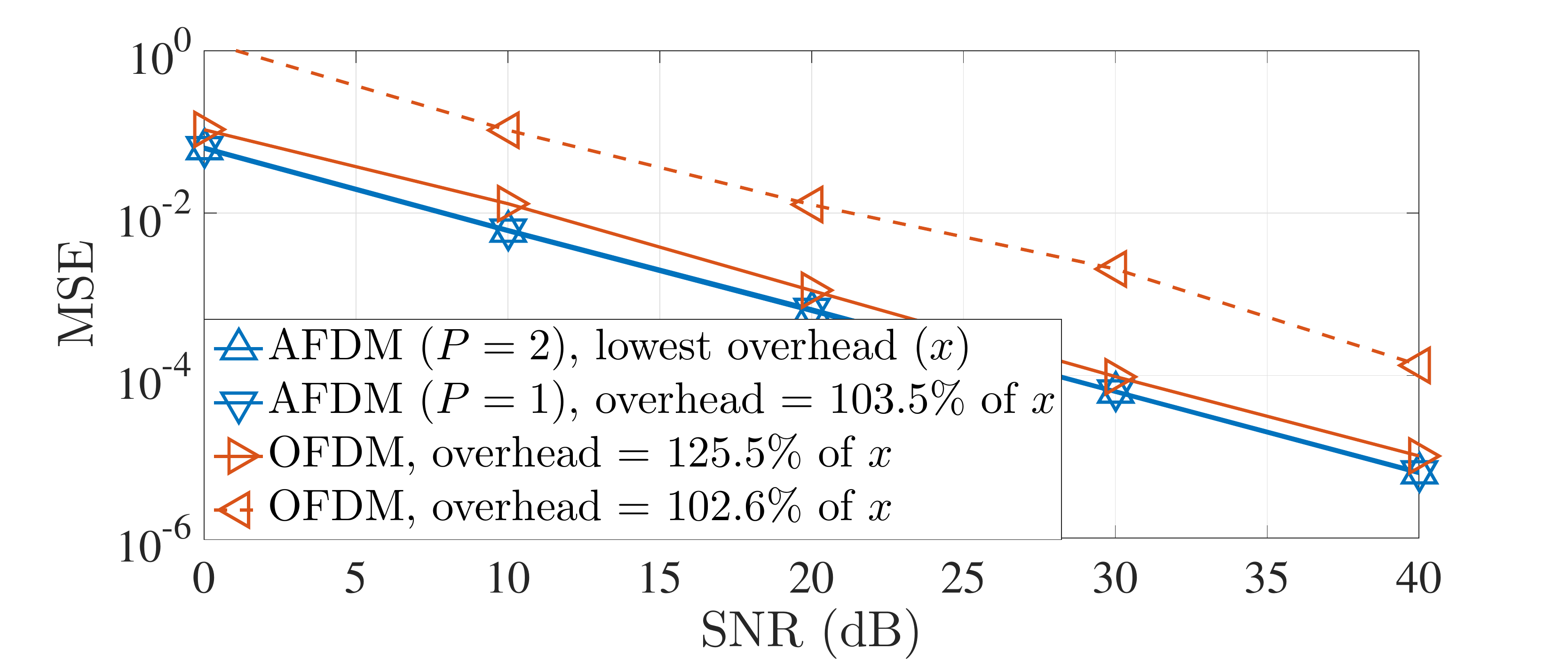} \\
      \small (b)
  \end{tabular}
  \medskip
  \caption{MSE performance for $N=8192, L=60, Q=15, p_{\rm}=0.2$, $M, \check{M}$ set as explained in the text and $\hat{M}=\hat{M}_{\min}$. (a) $p_{\rm D}=0.2$, (b) $p_{\rm D}=0.3$}
  \label{fig:simulations1_2}
\end{figure}

In Figure \ref{fig:overhead_vs_sparsity}, the average pilot overhead needed to achieve the target MSE is plotted for different values of $p_{\rm d}$ while $p_{\rm D}=0.2$.
\begin{figure}
    \centering
    \includegraphics[width=.7\columnwidth]{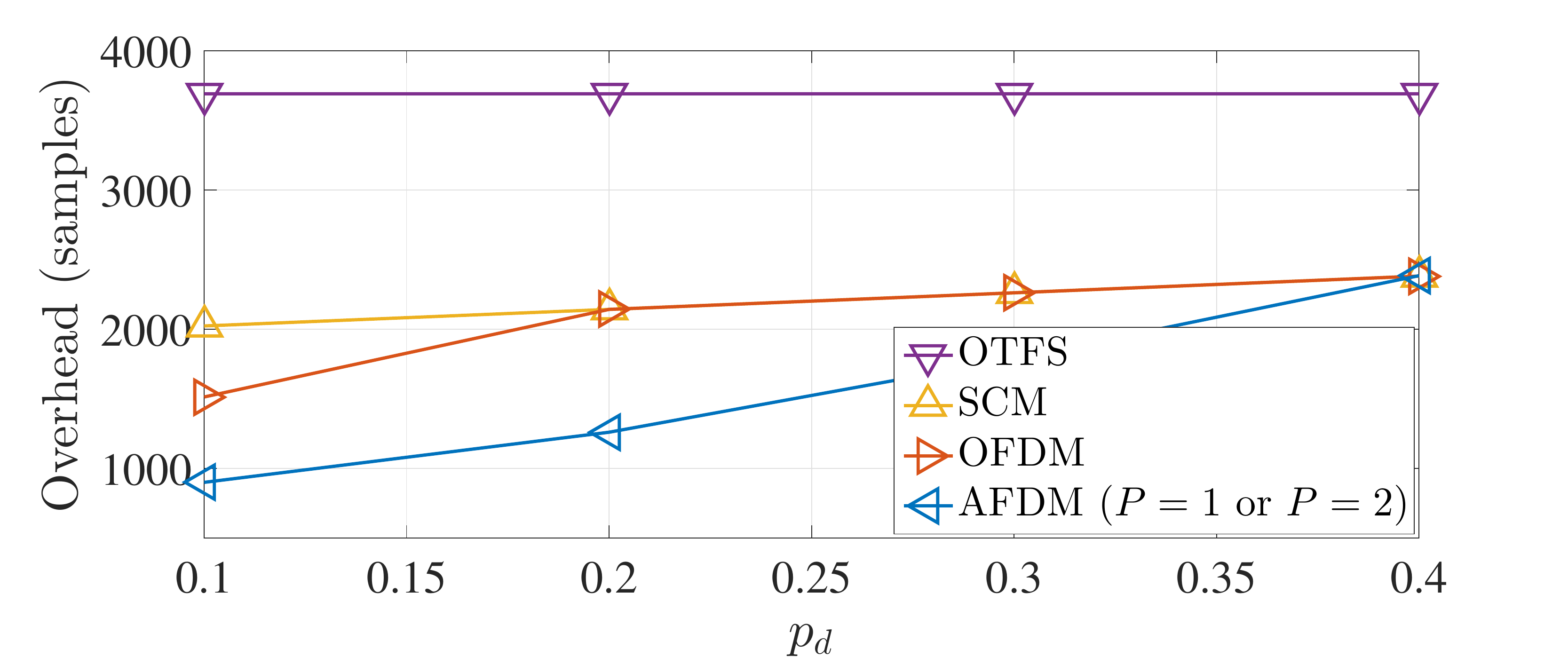}
    \caption{Channel estimation overhead for a target $\mathrm{MSE}=10^{-3}$ at $\mathrm{SNR}=20$ dB for $N=8192, L=60, Q=15, p_{\rm D}=0.2$}
    \label{fig:overhead_vs_sparsity}
\end{figure}
As expected, the gain with respect to OFDM, SCM and OTFS is the largest when sparsity is the highest. When there is no sparsity ($p_{\rm d}$ close to 1), performance measures other than pilot overhead can be used e.g., diversity order or channel delay-Doppler components separability. AFDM has been shown to achieve the optimal diversity order of LTV channels \cite{BemaniAFDM_TWC} in the general case irrespective of sparsity. 

\section{Conclusions}
Channel estimation for doubly dispersive wireless links that are sparse in both the delay and the Doppler domains was addressed. A special focus was given to the minimal pilot overhead required by different waveforms to achieve a target error performance while solving that problem. AFDM was shown to be optimal with respect to that performance measure when compared to SCM, OFDM and OTFS using both mathematical analysis and numerical results. Future work will address the problem without the on-grid approximation and provide numerical results and theoretical analysis for the problem in a compressed-sensing setting.

\appendices
\section{Proof of Lemma \ref{lem:independence}}
\label{app:proof_lemma_assumption}
\begin{proof}
We will only prove the independence of events $\mathcal{I}_{l_1,q_1}$, $\mathcal{I}_{l_2,q_2}$ for the case $T=2$. The independence of $\mathcal{I}_{l_1,q_1}$, $\overline{\mathcal{I}}_{l_2,q_2}$ and of $\overline{\mathcal{I}}_{l_1,q_1}$, $\overline{\mathcal{I}}_{l_2,q_2}$ and the result in the case $T>2$ follow similarly. Indeed, for a type-3 delay-Doppler sparsity
\begin{align}
\label{eq:lemma_proof_dDc}
    \mathbb{P}\left[\mathcal{I}_{l_1,q_1},\mathcal{I}_{l_2,q_2}\right]
    =&\mathbb{P}[I_{l_1}=1,I_{q_1}^{(l_1)}=1,I_{l_2}=1,I_{q_2}^{(l_2)}=1]\nonumber\\
    =&\mathbb{P}[I_{l_1}=1,-\frac{R}{2}<q_1-\Xi^{(l_1)}\leq\frac{R}{2}]\times\nonumber\\
    &\mathbb{P}[I_{l_2}=1,-\frac{R}{2}<q_2-\Xi^{(l_2)}\leq\frac{R}{2}]\nonumber\\
    =&\mathbb{P}[I_{l_1,q_1}=1]\mathbb{P}[I_{l_2,q_2}=1]\nonumber\\
    =&\mathbb{P}[\mathcal{I}_{l_1,q_1}]\mathbb{P}\left[\mathcal{I}_{l_1,q_1}\right]
\end{align}
This proves that type-3 channels satisfy the independence part of Assumption \ref{assum:independence}. Proving $\mathbb{P}[\mathcal{I}_{l,q}]=p_{\rm d}p_{\rm D}$ follows from the fact that $\mathbb{P}[\mathcal{I}_{l,q}]=\mathbb{P}[I_l=1]\mathbb{P}[I_q^{(l)}=1]$ with $\mathbb{P}[I_l=1]=p_{\rm d}$ due to Definition \ref{Def:dD_cluster_sparsity} and $\mathbb{P}[I_q^{(l)}=1]=\mathbb{P}[-\frac{R}{2}<q-\Xi^{(l)}\leq \frac{R}{2}]=\frac{R}{Q-\lfloor\frac{R}{2}\rfloor+Q-\lfloor\frac{R-1}{2}\rfloor+1}=p_{\rm D}$ where the second equality is due to the uniform distribution on $\Iintv{-Q+\lfloor\frac{R-1}{2}\rfloor,Q-\lfloor\frac{R}{2}\rfloor}$ of $\Xi^{(l)}$ and the third is due to the condition $\frac{R}{2Q-R}=p_{\rm D}$.
\end{proof}

\section{Proof of Lemma \ref{lem:sub_binomial}}
\label{app:proof_lemma_sub_nomial}
\begin{proof}
We only consider $P=1$. The proof for $P>1$ follows the same arguments. For any $k\in\Iintv{-Q,L-1+Q}$, define
\begin{equation}
\begin{split}
    \mathcal{Q}_k\triangleq\{l\in \Iintv{0,L-1} \mathrm{s.t.} \exists q\in\Iintv{-Q,Q}, q+l=k \}\\
    =\Iintv{k-Q,k+Q}\cap\Iintv{0,L-1}\triangleq \Iintv{l_{k,\max}, l_{k,\min}}
\end{split}
\end{equation}
For any $M\in\Iintv{0, 2Q+1}$ define
$\mathcal{L}_{k,M}\triangleq\left(
\begin{array}{c}
     \mathcal{Q}_k  \\
     M
\end{array}
\right)$ as the set of all $M$-size subsets of $\mathcal{Q}_k$. Then $\left|\mathcal{L}_{k,M}\right|=\left(\begin{array}{c}
     \left|\mathcal{Q}_k\right|  \\
     M
\end{array}\right)$ and
\begin{equation}\begin{split}
\mathbb{P}[X_k=M]=\\
\mathbb{P}[\bigcup_{\stackrel{(l_1,\ldots,l_M)}{\in\mathcal{L}_{k,M}}}\{\bigcap_{\stackrel{l\in}{\{l_1,\ldots,l_M\}}}\mathcal{I}_{l,l-k}\quad\cap\bigcap_{\stackrel{l\in\mathcal{Q}_k\setminus}{\{l_1,\ldots,l_M\}}}\overline{\mathcal{I}}_{l,l-k}\}]\\
=\sum_{\stackrel{(l_1,\ldots,l_M)}{\in\mathcal{L}_{k,M}}}\mathbb{P}[\bigcap_{\stackrel{l\in}{\{l_1,\ldots,l_M\}}}\mathcal{I}_{l,l-k}\quad\cap\bigcap_{\stackrel{l\in\mathcal{Q}_k\setminus}{\{l_1,\ldots,l_M\}}}\overline{\mathcal{I}}_{l,l-k}]\\
=\sum_{\stackrel{(l_1,\ldots,l_M)}{\in\mathcal{L}_{k,M}}}\prod_{\stackrel{l\in}{\{l_1,\ldots,l_M\}}}\mathbb{P}[\mathcal{I}_{l,l-k}]\prod_{\stackrel{l\in\mathcal{Q}_k\setminus}{\{l_1,\ldots,l_M\}}}\mathbb{P}[\overline{\mathcal{I}}_{l,l-k}]
\end{split}
\label{eq:Pr_Xk_M}
\end{equation}
where the second equality follows because the terms of the union are all disjoint events and where the third equality
is due to the independence property established by Assumption \ref{assum:independence} (in each term of the sum in the right-hand side of the second equality in \eqref{eq:Pr_Xk_M}, each pair of events is either $\left(\mathcal{I}_{l_1,q_1},\mathcal{I}_{l_2,q_2}\right)$, $\left(\mathcal{I}_{l_1,q_1},\overline{\mathcal{I}}_{l_2,q_2}\right)$ or $\left(\overline{\mathcal{I}}_{l_1,q_1},\overline{\mathcal{I}}_{l_2,q_2}\right)$ with $l_1\neq l_2$ and $q_1\neq q_2$).

If $k\in\Iintv{Q,L-1-Q}$, $l_{k,\min}=k-Q$ and $l_{k,\max}=k+Q$ i.e., $|\mathcal{Q}_k|=2Q+1$ and $\left|\mathcal{Q}_k\setminus\{l_1,\ldots,l_M\}\right|=2Q+1-M$ as shown in Figure \ref{fig:interval_k}-(a).
\begin{figure}
  \centering
  \begin{tabular}{ c @{\hspace{5pt}} c }
  \includegraphics[width=.4\columnwidth]{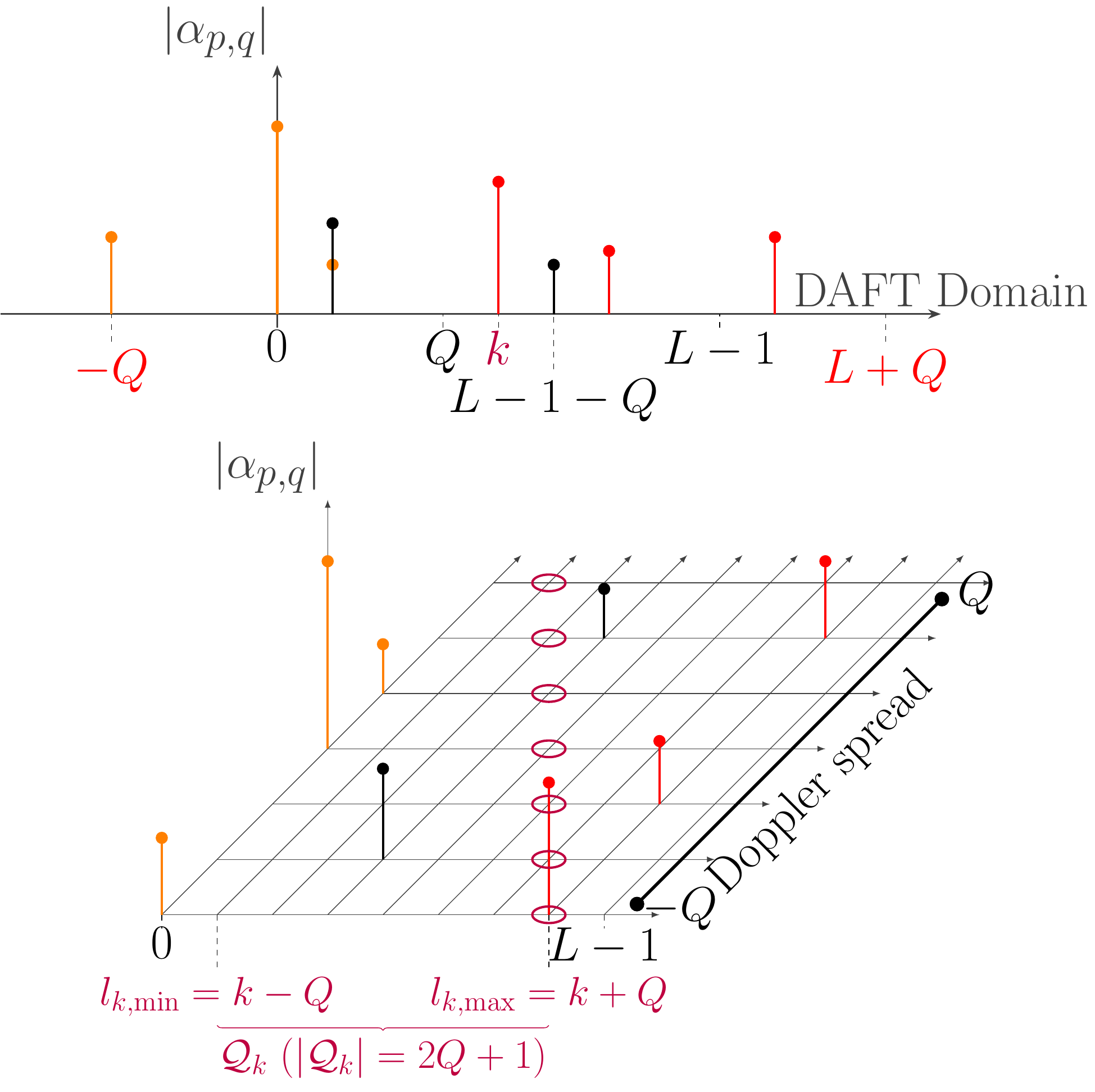} &
    \includegraphics[width=.4\columnwidth]{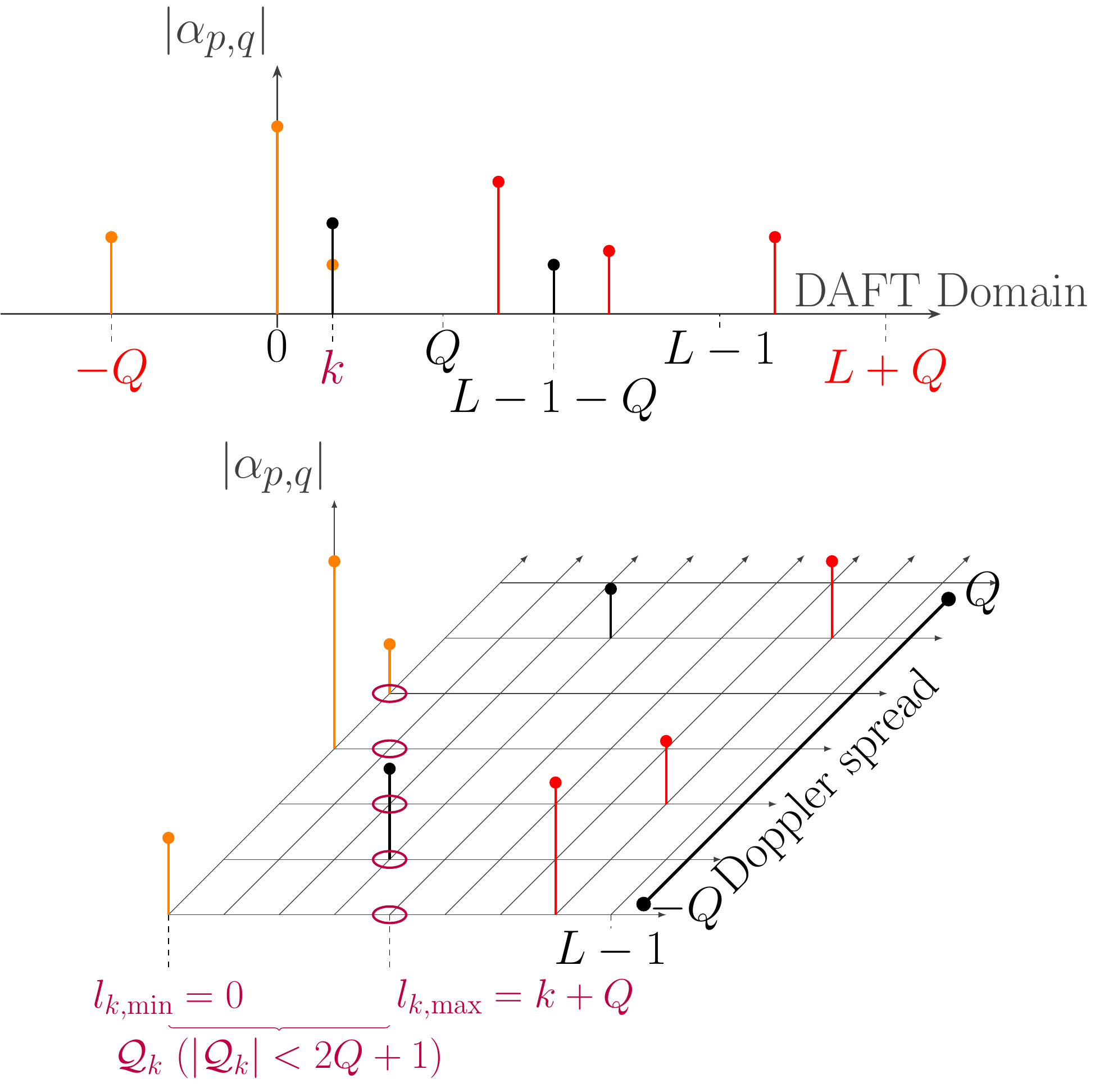} \\
    \small (a) &
      \small (b)
  \end{tabular}
  \medskip
  \caption{Examples of interval $\mathcal{Q}_k$ satisfying, (a) $\rho_k\triangleq\left|\mathcal{Q}_k\right|=2Q+1$, (b) $\rho_k<2Q+1$. Grid points surrounded by circles represent potential delay-Doppler taps that may appear at the $k$-th position in the DAFT domain.}
  \label{fig:interval_k}
  \vspace{-2mm}

\end{figure}
Since $\mathbb{P}[\mathcal{I}_{l,l-k}]=p_{\rm d}p_{\rm D}$ and $\mathbb{P}[\overline{\mathcal{I}}_{l,l-k}]=1-p_{\rm d}p_{\rm D}$ due to Assumption \ref{assum:independence}, we get
\begin{equation}
    \mathbb{P}\left[X_k=M\right]=\left(
\begin{array}{c}
     2Q+1  \\
     M
\end{array}\right)\left(p_{\rm d}p_{\rm D}\right)^{M}\left(1-p_{\rm d}p_{\rm D}\right)^{2Q+1-M}\:.
\end{equation}
Thus, $\forall k\in\Iintv{Q,L-Q},X_k\sim\mathrm{B}(2Q+1,p_{\rm d}p_{\rm D})$. If $F(n,k,p)$ is the cumulative distribution function (CDF) of $\mathrm{B}(n,p)$ then
\begin{equation}
    \label{eq:ccdf1}
    \mathbb{P}\left[X_k>M\right]=1-F(2Q+1,M,p_{\rm d}p_{\rm D}),\quad\forall k\in\Iintv{Q,L-Q}\:.
\end{equation} 
If $k\in\Iintv{-Q,Q-1}\cup\Iintv{L-Q,L-1+Q}$ then $l_{k,\min}=k-Q$ and $l_{k,\max}=k+Q$ cannot be both satisfied and $\left|\mathcal{Q}_k\right|<2Q+1$ as shown in Figure \ref{fig:interval_k}-(b).
 Define $\rho_k\triangleq\left|\mathcal{Q}_k\right|$. Either $\rho_k<M$, in which case $\mathcal{L}_{k,M}=\emptyset$ and $\mathbb{P}\left[X_k=M\right]=0$ or $M\leq\rho_k<2Q+1$, in which case $\left|\mathcal{Q}_k\setminus\{l_1,\ldots,l_M\}\right|=\rho_k-M$ and 
\begin{equation}
    \mathbb{P}\left[X_k=M\right]=\left(
\begin{array}{c}
     \rho_k  \\
     M
\end{array}\right)\left(p_{\rm d}p_{\rm D}\right)^{M}\left(1-p_{\rm d}p_{\rm D}\right)^{\rho_k-M}\:.
\end{equation}
We thus have $X_k\sim\mathrm{B}(\rho_k,p_{\rm d}p_{\rm D})$ leading to
\begin{equation}
    \label{eq:ccdf2}
    \begin{split}
    \mathbb{P}\left[X_k>M\right]=1-F(\rho_k,M,p_{\rm d}p_{\rm D})\\
    \leq1-F(2Q+1,M,p_{\rm d}p_{\rm D})\\
    \forall k\in\Iintv{-Q,Q-1}\cup\Iintv{L-Q,L-1+Q}\:.
    \end{split}
\end{equation}
The inequality in \eqref{eq:ccdf2} follows from the decreasing monotonicity property in $n$ of the CDF of the binomial distribution $\mathrm{B}(n,k,p)$.
Combining \eqref{eq:ccdf1} and \eqref{eq:ccdf2} gives us a uniform upper bound on the CCDF of $X_k$ for any $k\in\Iintv{-Q,L-1+Q}$.
\end{proof}
\section{Sketch of the proof of Theorem \ref{theo:adaptive_M}}
\label{app:proof_theorem_adaptive_M}
\begin{proof}
Set $M=O(1)$. Using Lemma \ref{lem:sub_binomial} and Chernoff's bound applied to $\mathrm{B}\left(2\lceil\frac{Q}{P}\rceil+1,p_dp_D\right)$ it can be shown that $\mathbb{P}[X_k>M]=O(\frac{1}{K})$. 
Next, define $\Delta_{l,q}\triangleq|\alpha_{l,q}-\hat{\alpha}_{l,q}|$, $k_{l,q}\triangleq q-2c_1Nl$ and note that the MSE writes now as $\sum_{l=0}^{L-1}\sum_{q=-Q}^{Q}\mathbb{E}[\Delta_{l,q}^2]$ with
\begin{align}
    \label{eq:total_exp}
    &\mathbb{E}[\Delta_{l,q}^2]=
    \mathbb{E}[\Delta_{l,q}^2|I_{l,q}=0]\mathbb{P}[I_{l,q}=0]+\nonumber\\
    &\mathbb{E}[\Delta_{l,q}^2|I_{l,q}=1,0<X_{k_{l,q}}\leq M]\mathbb{P}[I_{l,q}=1,0<X_{k_{l,q}}\leq M]+\nonumber\\
    &\mathbb{E}[\Delta_{l,q}^2|I_{l,q}=1,X_{k_{l,q}}>M]\mathbb{P}[I_{l,q}=1,X_{k_{l,q}}>M]\:.
\end{align}
The first term in the right-hand side of \eqref{eq:total_exp} is zero. The second and third terms can be bounded uniformly in $(l,q)$ using Lemma \ref{lem:sub_binomial} and the properties of the MMSE estimator (the two bounds are different since the measurement matrix of the partial problem of MMSE estimating the $X_{k_{l,k}}$ unknowns from $M$ measurements is rank-deficient in the third term) in a way that there exists a constant $C$ such that $\sum_{l=0}^{L-1}\sum_{q=-Q}^{Q}\mathbb{E}[\Delta_{l,q}^2]\leq C(\sigma_w^2+\frac{1}{K})$ holds for $K$ large enough $\forall\sigma_w^2$ proving that the MSE tends to zero when the number of pilots $M_{\min}=O(1)$. This number of pilots, each costing $(L-1)P+2Q+1$ samples, results since $P=O(K^{(\kappa_{\rm d}+\kappa_{\rm D}-1)_+})$ in a total overhead $M_{\min}\left((L-1)P+2Q+1\right)=O\left(K^{\kappa_{\rm d}+\kappa_{\rm D}}\right)$.
\end{proof}
\bibliographystyle{IEEEtran}
\bibliography{IEEEabrv,Citations}
\end{document}